\newtheorem{theorem}{Theorem}[section]
\newtheorem{lemma}[theorem]{Lemma}
\newcommand{\Rt}{\mathbb{R}^3}
\newcommand{\den}{\varepsilon}
\newcommand{\es}{\varepsilon_\sigma}
\newcommand{\eo}{\varepsilon_\omega}
\newcommand{\ms}{m_\sigma}
\newcommand{\mo}{m_\omega}
\newcommand{\Rdm}{\mathbb{R}^2_{+}}
\newcommand{\Ld}{\Delta}
\newcommand{\Ldt}{{^{(3)}\Delta}}
\newcommand{\ck}{\mathcal{L}}
\newcommand{\po}{d}
\newcommand{\ps}{p}
\newcommand{\Dk}{\Xi}
\title{On the linear stability of the extreme Kerr black hole under axially
  symmetric perturbations}
\author{Sergio Dain and Ivan Gentile de Austria\\
 Facultad de Matem\'atica, Astronom\'{i}a y F\'{i}sica, FaMAF,\\
  Universidad Nacional de C\'ordoba,\\
  Instituto de F\'{\i}sica Enrique Gaviola, IFEG, CONICET,\\
  Ciudad Universitaria, (5000) C\'ordoba, Argentina.}
\begin{document}
\maketitle

\begin{abstract}
  We prove that for axially symmetric linear gravitational perturbations of the
  extreme Kerr black hole there exists a positive definite and conserved
  energy.  This provides a basic criteria for linear stability in axial
  symmetry. In the particular case of Minkowski, using this energy we also
  prove pointwise boundedness of the perturbation in a remarkable simple way.
\end{abstract}

\section{Introduction}
\label{s:introduction}
 
Recently there has been considerable progress on the long standing and central
open problem of black hole stability in General Relativity (see the review
articles \cite{Dafermos:2010hd} \cite{Dafermos:2008en} and reference therein).
The following three aspects of this problem motivated the present work. 

\textbf{(i) Non-modal stability of linear gravitational perturbations:} the
non-modal stability of linear gravitational perturbations for the Kerr black
hole still remains unsolved.  The works of Regge, Wheeler \cite{Regge:1957td},
Zerilli \cite{Zerilli:1970se} \cite{Zerilli:1974ai} and Moncrief
\cite{Moncrief:1975sb} determined the modal linear stability of gravitational
perturbations for the Schwarzschild black hole by ruling out exponential growth
in time for every individual mode.  The modal stability for the Kerr black hole
was proved by Whiting \cite{Whiting89} using the Teukolsky equation. However,
modal stability is not enough to exclude that general linear perturbations grow
unbounded in time (see, for example, the discussion in \cite{wald:1056} and
\cite{Dafermos:2010hd}). The study of black hole non-modal stability was
initiated by Kay and Wald in \cite{wald:1056} \cite{Kay:1987ax}. They prove
that solutions of the linear wave equation on a Schwarzschild black hole
background remain bounded by a constant for all time. An important ingredient
in this proof is the use of conserved energies to control the norm of the
solution. The analog of the Kay-Wald theorem on a large class of backgrounds
which includes the slow rotating Kerr black hole was first proved by Dafermos
and Rodniaski \cite{Dafermos:2008ys} and then, independently, in the special
case of slow rotating Kerr by Andersson and Blue \cite{andersson09}. In
\cite{Dafermos:2010hd} Dafermos and Rodniaski provide the essential elements of
the proof of this theorem for the general subextremal Kerr black
hole. Recently, this problem was finally solved in \cite{Dafermos:2014cua}.
For a complete list of references with important related works on this subject
see the review articles \cite{Dafermos:2010hd} \cite{Dafermos:2008en}
\cite{Finster:2008bg}. All these results concern the wave equation. For
gravitational perturbations the only non-modal stability result was given very
recently by Dotti \cite{Dotti:2013uxa} for the Schwarzschild black hole. There
are, so far, no results regarding the non-modal stability of the Kerr black
hole under linear gravitational perturbations.

\textbf{(ii) Stability and instability of extreme black holes:} extreme black
holes are relevant because they lie on the boundary between black holes and
naked singularities and hence it is expected that their study shed light on the
cosmic censorship conjecture.  Recently, Aretakis discovered certain
instabilities for extreme black holes \cite{Aretakis:2011ha}
\cite{Aretakis:2011hc}. These instabilities concern transverse derivatives of
the field at the horizon: a conservation law ensures that the first transverse
derivative of the field on the event horizon generically does not decay, this
implies that the second transverse derivative of the field generically grows
with time on the horizon. These instabilities were discovered first for the
scalar wave equation on the extreme Reissner-Nordstr\"om black hole, a similar
result also holds for the extreme Kerr black hole \cite{Aretakis:2012ei}
\cite{Aretakis:2011gz}.  These works were extended in several directions: for
generic extreme black holes and linear gravitational perturbations
\cite{Lucietti:2012sf}, for certain higher dimensional extreme vacuum black
holes \cite{Murata:2012ct}; for massive scalar field and for coupled linearized
gravitational and electromagnetic perturbations \cite{Lucietti:2012xr}, for a
test scalar field with a nonlinear self-interaction in the extreme Kerr
geometry \cite{Aretakis:2013dpa}. An interesting relation between these
instabilities and the Newman-Penrose constants was pointed out by Bizon and
Friedrich \cite{Bizon:2012we}. This relation was also independently observed by
Lucietti, Murata, Reall and Tanahashi \cite{Lucietti:2012xr}. Finally, a
numerical study of nonlinear evolution of this instability for spherically
symmetric perturbations of an extreme Reissner-Nordstr\"om black hole was
performed by Murata, H. S. Reall, and N. Tanahashi in \cite{Murata:2013daa}.

An important question regarding the dynamical behaviour of extremal black holes
is whether a non-extremal black hole can evolve to an extremal one at late
times. In \cite{Reiris:2013efa} Reiris proved that there exists arbitrary small
perturbations of the extreme black hole initial data that can not decay in time
into any extreme black hole. On the other hand, in \cite{Murata:2013daa} fine
tuned initial data are numerically constructed which settle to an extreme
Reissner-Nordstr\"om black hole. There is no contradiction between these two
results since they apply to different kind of data.  It is interesting to note
that the construction in \cite{Reiris:2013efa} relies on geometrical
inequalities between area and charges on trapped surfaces (see \cite{dain12}
and reference therein), in contrast in the spacetime considered in
\cite{Murata:2013daa} there are no trapped surfaces.

The discussion above concern instability of extreme black
holes. However, there are also stability results for this class of black
holes. The most relevant of them is that the solutions of the wave equations
remain pointwise bounded in the black hole exterior region
\cite{Aretakis:2011ha} (see also \cite{Dain:2012qw}).

\textbf{(iii) Non-linear stability:} the problem of the black hole non-linear
stability remains largely open (see the discussion in \cite{Dafermos:2010hd}
and reference therein).  The linear studies previously discussed are expected
to provide insight into the non-linear problem. However, this will be possible
only if they rely on techniques that can be suitable extended to the non-linear
regime. One of the most important of these techniques are the energy estimates.

\vspace{1cm}

The main result of this article is the following:
\begin{quote}
  \emph{For axially symmetric linear gravitational perturbation of the extreme
    Kerr black hole there exists an energy which is positive definite and
    conserved.}
\end{quote}

A precise version of this statement is given in  Theorem \ref{theorem}. In the
following we discuss the relation of this result with  the points (i), (ii) and
(iii) discussed above. 

\textbf{(i)} The conserved energy for the linear perturbation has a similar
structure as the energy of the wave equation: it is an integral over an
spacelike surface of terms that involves squares of first derivatives of the
perturbations. This energy is related with the second order expansion of the
ADM mass. However it is important to stress that the positiveness of this
energy can not be easily deduced from the positiveness of the ADM mass. In
fact, as we will see, this result is proved as a consequence of highly
non-trivial identities.  It is also important to emphasize that this energy is
positive also inside the ergosphere.

The energy expression and its conservation do not require any mode expansion of
the fields. The existence of this conserved quantity provides a basic non-modal
stability criteria for axially symmetric linear perturbation of the extreme
Kerr black hole. Since the equation are linear and the coefficients of them do
not depend on time, it is possible to construct an infinitely number of higher
order conserved energies. We expect that these higher order energies can be
used to prove pointwise boundedness of the solution, in a similar fashion as in
\cite{Dain:2012qw}. In that reference the pointwise boundedness of solutions of
the wave equation on the extreme Reissner-Nordstr\"om black hole was proved
using only higher order energies estimates. But, up to now, we were not able to
extend this result to the present context.  However, in the particular case of
the Minkowski background we prove a pointwise bound for the linear
perturbations in a remarkable simple way. Comparing with the Minkowski case,
the main difficulties to obtain pointwise estimates from the energy in the Kerr
case are two: first, the equations for the norm and the twist are coupled and
hence it is not possible to separate them as in the Minkowski case. Second, the
coefficient of the equations are singular at the horizon and hence we can not
use standard Sobolev estimates.

This conserved energy is closely related with the energy studied by Hollands and
Wald \cite{Hollands:2012sf} (see also \cite{Keir:2013jga}). We expect that the
techniques used here to prove positiveness should also be useful in that
context. Also, the boundary conditions at the horizon proposed in
\cite{Hollands:2012sf} are likely to be useful to generalize our results to the
non-extreme case.

\textbf{(ii)} The existence of this conserved energy and its related stability
criteria are not in contradiction with Aretakis instabilities. The situation is
very similar as the one discussed in \cite{Dain:2012qw} for the case of the
wave equation: the energy is only defined in the black hole exterior region and
it does not control any transverse derivative at the horizon.

\textbf{(iii)} As we pointed out above, the energy used here is related with
the ADM mass which is also conserved in the non-linear regime (see the
discussion in \cite{Dain:2009wi}) . That is, the energy estimates used here are
very likely to be useful in the non-linear case.

The plan of the article is the following. The expression of the conserved
energy arises naturally in a particular gauge for the Einstein equation: the
maximal--isothermal gauge. We review this gauge in section
\ref{sec:axially-symm-pert}. In that section we also present the linearized
equations on a class of stationary backgrounds. In section \ref{minkowki} we
study the particular case of the Minkowski background, where we prove that the
solutions are pointwise bounded in terms of a constant that depends only on the
conserved energy, see theorem \ref{t:minkowski}.  In section \ref{s:kerr} we
study the extreme Kerr background and we prove the main result of this article
given by theorem \ref{theorem}. Finally, in the appendices we write the Kerr
solution in the maximal--isothermal gauge and we also prove a Sobolev
like estimate needed in the proof of theorem \ref{t:minkowski}.

\section{Axisymmetric   Einstein equations  in the maximal--isothermal gauge} 
\label{sec:axially-symm-pert}

In axial symmetry, the maximal-isothermal gauge has the important property that
the total ADM mass can be written as a positive definite integral on the
spacelike hypersurfaces of the foliation and the integral is constant along the
evolution \cite{Dain:2008xr}. The conserved energy for the linear perturbations
will be obtained as an appropriate second order expansion of this integral.
In this section we first review the full Einstein equations in this gauge in
subsection \ref{sec:einstein-equations} and then in subsection
\ref{linealizacion} we perform the linearization on a class of stationary
backgrounds that include the Kerr black hole. On this class of backgrounds the
linearized equations in this gauge have a remarkably simply form.

\subsection{Einstein equations}
\label{sec:einstein-equations}
Einstein equations in the maximal-isothermal gauge were studied, with slight
variations, in several works \cite{Choptuik:2003as}, \cite{Rinne:thesis},
\cite{Garfinkle:2000hd}, \cite{Dain:2009wi}.  In this section we review these
equations, we closely follow \cite{Dain:2009wi}.

In axial symmetry, it possible to perform a symmetry reduction of Einstein
equations to obtain a set of geometrical equations in the 3-dimensional
quotient manifold in terms a Lorenzian 3-dimensional metric.  See \cite{Dain:2009wi} 
for the details.  In appendix \ref{sec:kerr-black-hole} we explicitly perform
this reduction for the Kerr metric.

On the 3-dimensional quotient manifold we take a foliation of spacelike
surfaces. The intrinsic metric on the slices of the foliation is denoted by
$q_{AB}$ and the extrinsic curvature by $\chi_{AB}$.  Here the indices
$A,B\cdots$ are 2-dimensional.

The maximal-isothermal gauge and its associated cylindrical coordinates
$(t,\rho,z)$ are defined by the following two conditions. For the the lapse,
denoted by $\alpha$, we impose the maximal condition on the 2-surfaces
$t=constant$.  That is, the trace $\chi$ of the extrinsic curvature vanishes
\begin{equation}
  \label{eq:34}
  \chi =q^{AB}\chi_{AB}=0. 
\end{equation}
The shift, denoted by $\beta^A$, is fixed by the requirement that the intrinsic
metric $q_{AB}$ has the following form
\begin{equation}
\label{eq:2 metrica}
q_{AB}=e^{2u}\delta_{AB},
\end{equation}
where $\delta_{AB}$ is the fixed flat metric  
\begin{equation}
 \label{eq:metrica plana}
 \delta=d\rho^{2}+dz^{2}.
\end{equation}

For our purposes, the relevant geometries for the 2-dimensional spacelike
surfaces are the half plane $\Rdm$ (defined by $-\infty <z<\infty$, $0\leq \rho
<\infty$) for the Minkowski case or $\Rdm\setminus \{0\}$ for the black hole
case. In that case the origin will represent an extra asymptotic end.  For both
cases the axis of symmetry is defined by $\rho=0$.

The dynamical degree of freedom of the gravitational field are encoded in two
geometrical scalars $\eta$ and $\omega$, the square of the norm and the twist
of the axial Killing vector respectively.  Due to the behaviour at the axis,
instead of $\eta$, $\alpha$ and $u$ it is often convenient to work with the
auxiliary function $\sigma$, $\bar \alpha$ and $q$ defined by
\begin{equation}
\label{eq:eta-sigma}
\eta=\rho^{2}e^{\sigma}, \quad \alpha=\rho\bar{\alpha},\quad 
 u=\ln\rho+\sigma+q. 
\end{equation}

To write the equations we will make use of the following  differential operators. The 2-dimensional Laplacian $\Ld$ defined by 
\begin{equation}
  \label{eq:Ld}
\Ld q = \partial^2_\rho q +  \partial^2_z  q,
\end{equation}
and the operator  $\Ldt$  defined as 
\begin{equation}
  \label{eq:Ldt}
  \Ldt \sigma =\Ld\sigma +\frac{\partial_\rho \sigma}{\rho}.
\end{equation}
This operator, which appears frequently in the rest of the article, is the flat
Laplace operator in 3-dimensions written in cylindrical coordinates and acting
on axially symmetric functions. The conformal Killing operator $\ck$ acting on
a vector $\beta_A$ is defined by
\begin{equation}
  \label{eq:operador killing}
  (\ck\beta)_{AB}=\partial_{A}\beta_{B}+\partial_{B}\beta_{A}-\delta_{AB}\partial_{C}\beta^{C}.
\end{equation}
In these equations $\partial$ denotes partial derivatives with respect to the
space coordinates $(\rho,z)$ and all the indices are moved with the flat metric
$\delta_{AB}$. We denote by a dot the partial derivative with respect to $t$
and we define the prime operator as
\begin{equation}
  \label{eq:prime}
\eta' =  \frac{1}{\alpha} \left(\dot \eta- \beta^A\partial_A \eta\right). 
\end{equation}

Einstein equations in the maximal-isothermal gauge are divided into three
groups: evolution equations, constraint equations and gauge equations.  The
evolution equations are further divided into two groups, evolution equations
for the dynamical degree of freedom $(\sigma, \omega)$ and evolution equations
for the metric $q_{AB}$ and second fundamental form $\chi_{AB}$. Due to the
axial symmetry, these equations are not independent (see the discussion in
\cite{Rinne:thesis}). For example, the constraint equations are essentially
equivalent to the evolution equations for the metric and second fundamental
form.  In particular, in this article we will not make use of the evolution
equations for the metric and second fundamental form, we will always use
instead a time derivative of the constraint equations.

Bellow we write the equations, for the deduction of them see \cite{Dain:2009wi}. We
divide  them in the three groups discussed above. In the next sections the
linearization of these equations on different background is performed,  for the
sake of clarity we will always group them in the same way.

\textbf{Evolution equations:}

The evolution equations for $\sigma$ and $\omega$ are given by\footnote{There
  were a misprint in equation (63) in \cite{Dain:2009wi}, a minus sign is missing
  on the right hand side of this equation. We have corrected that in equation
  (\ref{eq:evolucion-sigma}).}
\begin{align}
\label{eq:evolucion-sigma}
&-e^{2u} \sigma'' + ^{(3)}\Delta \sigma + \partial_A \sigma \frac{\partial^A \bar
   \alpha}{\bar \alpha}-2e^{2u} (\log\rho)'' +
 2\frac{\partial_\rho\bar\alpha}{\bar \alpha\rho} = 
\frac{\left(e^{2u} \omega'^2-|\partial \omega|^2 \right)}{ \eta^2},  \\ 
  \label{eq:evolucion-omega}
  &-e^{2u} \omega'' + ^{(3)}\Delta \omega + \partial_A \omega \frac{\partial^A \bar
    \alpha}{\bar \alpha} =
  \frac{2\left(\partial_A\omega \partial^A\eta-e^{2u}\omega'\eta' \right)}{\eta}. 
\end{align}
The evolution equation for the metric $q_{AB}$ (by equation (\ref{eq:2 metrica}) this is only one equation for the conformal factor $u$) and the second fundamental form $\chi_{AB}$ are given by 
\begin{align}
  \label{eq:evolucion-metrica}
  2\dot u &=\partial_A\beta^A +2 \beta^A \partial_A u,\\ 
 \label{eq:evolucion-chi}
 \dot \chi_{AB}  &= \pounds_\beta \chi_{AB}- F_{AB} -\alpha G_{AB} 
 -2\alpha \chi_{AC} \chi^{C}_{B},  
\end{align}
where $\pounds$ denotes Lie derivative and  we have defined
\begin{equation}
 \label{eq:F}
 F_{AB}=  \partial_A\partial_B\alpha -\frac{1}{2}\delta_{AB}\Delta\alpha-2\partial_{(A}\alpha \partial_{B)}u
 + \partial_C\alpha \partial^Cu \delta_{AB},
\end{equation}
and
\begin{align}
   \label{eq:G}
   G_{AB} &= {}^{(3)}R_{AB}- \frac{1}{2}\delta_{AB} \, {}^{(3)}R_{CD} \delta^{CD}, \\
 \label{eq:ricci}
 {}^{(3)}R_{AB} & = \frac{1}{2\eta^2} (\partial_A \eta\partial_B \eta + \partial_A \omega \partial_B \omega).  
\end{align}

\textbf{Constraint equations:}

The momentum and Hamiltonian constraints are given by 
\begin{align}
  \label{eq:momento}
  \partial^B \chi_{AB} &=-\frac{e^{2u}}{2\eta^2}\left(\eta' \partial_A\eta  +
   \omega' \partial_A\omega \right), \\
  \label{eq:hamiltoniano}
  ^{(3)}\Delta \sigma +\Delta q &= -\frac{\den}{4\rho},
\end{align}
where we have defined the energy density $\den$ by 
\begin{equation}
 \label{eq:epsilon}
 \den =  \left( \frac{e^{2u}}{\eta^2}\left(\eta'^2 + \omega'^2 \right)+ |\partial
 \sigma|^2 + \frac{|\partial \omega|^2}{\eta^2} +  2 e^{-2u}  \chi^{AB} \chi_{AB}\right)\rho. 
\end{equation}
It is important to emphasize that $\den$ is positive definite. 

 \textbf{Gauge equations:} 

The gauge equations for lapse and shift are given by
\begin{align}
   \label{eq:alpha}
   \Delta \alpha& = \alpha  \left (e^{-2u} \chi^{AB}  \chi_{AB} +e^{2u} \bar \mu \right ), \\ 
\label{eq:beta}
(\ck\beta )_{AB} & =   2\alpha e^{-2u}  \chi_{AB},
\end{align}
where we have defined $\bar \mu$  by
\begin{equation}
  \label{eq:36}
  \bar \mu = \frac{1}{2\eta^2}\left(\eta'^2 + \omega'^2 \right).
\end{equation}

As we mentioned above, the most important property of this gauge is that the
total ADM mass of the spacetime is given by the following integral on the half
plane $\mathbb{R}_{+}^{2}$ of the positive definite energy density $\den$
\begin{equation} 
  \label{eq:masa}
  m= \frac{1}{16}\int_{\mathbb{R}_{+}^{2}}   \den  \, d\rho dz.   
\end{equation}
Moreover, this quantity is conserved along the evolution in this gauge (see
\cite{Dain:2008xr}).  We emphasize that the domain of integration in
\eqref{eq:masa} is $\mathbb{R}_{+}^{2}$ even in the case of a black hole (see
the discussion in \cite{Dain06c}).

We have introduced two slight changes of notation with respect to
\cite{Dain:2009wi}. First we have suppressed the hat symbol over tensors like
$\hat \chi^{AB}$ introduced in \cite{Dain:2009wi} to distinguish between
indices moved with the flat metric $\delta_{AB}$ and with the metric
$q_{AB}$. In this article there is no danger of confusion since all the indices
are moved with the flat metric $\delta_{AB}$. Second, we have defined the
energy density $\den$ in (\ref{eq:epsilon}) with an extra factor $\rho$. This
is convenient for the calculations presented in the next section since the
integral in the mass (\ref{eq:masa}) has then the flat volume element in $\Rdm$
(in \cite{Dain:2009wi} the $\rho$ factor appears in the volume element). The
only disadvantage of this notation is that in the right hand side of the
Hamiltonian constraint (\ref{eq:hamiltoniano}) an extra $\rho$ appears in the
denominator.

\textbf{Boundary conditions:}

At spacelike infinity we assume the following standard asymptotically flat fall
off condition in the limit $r\to\infty$
\begin{equation}
  \label{eq:condicionesinfinito}
   \sigma, \beta^{A}, \chi_{AB}, \dot\sigma, \dot \beta^{A}, \dot \chi_{AB} =o_{1}(r^{-1/2}),\quad \bar\alpha -1=o_1(1),
 \end{equation}
 where we write $f=o_j(r^{k})$ if $f$ satisfies $\partial^\alpha
 f=o(r^{k-|\alpha|})$, for $|\alpha|\leq j$, where $\alpha$ is a multi-index
 and the spherical radius $r$ is defined by $r=\sqrt{\rho^2+z^2}$. In the
 following we will also make use of a similar notation for $f=O_j(r^k)$.

At the axis the functions must satisfy  the following parity conditions 
\begin{equation}
 \label{eq:condicionesparidad1}
 \eta,\;\omega,\;\bar \alpha,\;u,\;q,\;\sigma,\;\chi_{\rho\rho},\;\beta^{z}\; \text{are even functions of}\;\rho,
\end{equation}
and
\begin{equation}
 \label{eq:condicionesparidad2}
 \alpha, \;\chi_{\rho z},\;\beta^{\rho}\;\text{are odd functions of}\;\rho.
\end{equation}
Note that odd functions vanish at the axis and $\rho$ derivative of even functions vanishes at the axis.

In the case of extreme Kerr black hole we have an extra asymptotic end, which in these coordinates is located at  the origin. For that case we will assume the following behaviour  in the limit $r\to 0$ 
\begin{equation}
  \label{eq:condicionesend}
  \sigma, \beta^{A}, \chi_{AB}, \dot\sigma, \dot \beta^{A}, \dot \chi_{AB} =o_{1}(r^{-1/2}), \quad \bar\alpha -1=o_1(1). 
 \end{equation}
 These conditions encompass the asymptotically cylindrical behaviour typical of
 extreme black hole at this end (see the discussion in \cite{Dain06c} and
 \cite{Dain:2010uh}).

The behaviour of the twist $\omega$ is more subtle because it contains the
information of the angular momentum. It will be discussed in the
next sections.

\subsection{Linearization}
\label{linealizacion}
Denote by $\psi$ any of the unknowns of the previous equations.  Consider a
one-parameter family of exact solutions $\psi(\lambda)$. To linearize the
equations with respect to the family $\psi(\lambda)$ means to take a derivative
with respect to $\lambda$ to the equations and evaluate them at $\lambda=0$. 
We will use the following notation for the background and the first order linearization 
\begin{equation}
  \label{eq:37}
  \psi_0=\left.\psi(\lambda) \right|_{\lambda=0}, \quad \psi_1=\left . \frac{d
      \psi(\lambda)}{d\lambda }\right|_{\lambda=0}. 
\end{equation}
 
We will assume that the background solution is stationary in this gauge, that is  
\begin{equation}
  \label{eq:5}
\dot \psi_0 =0. 
\end{equation}
Moreover, we will also assume that the background shift and second fundamental form vanished
\begin{equation}
 \label{eq:beta0chi0}
 \beta^A_{0}=0,\quad \chi_{0AB}=0.
\end{equation}
The condition (\ref{eq:beta0chi0}) is satisfied by the Kerr solution for any
choice of the mass and angular momentum parameters, see appendix
\ref{sec:kerr-black-hole}. This condition simplifies considerably the
equations. In particular, from (\ref{eq:5}) and (\ref{eq:beta0chi0}) we deduce
\begin{equation}
  \label{eq:42}
 \psi'_0 =0.  
\end{equation}

The first important consequence of the background assumptions
(\ref{eq:beta0chi0}) is that the first order expansion of the lapse is
trivial. Namely, the right hand side of equation (\ref{eq:alpha}) is second
order in $\lambda$, hence we obtain
\begin{equation}
 \label{eq:alpha1}
 \Delta \alpha_0=0, \quad \Delta \alpha_{1}=0.
\end{equation}
Since the boundary condition for 
for $\alpha$ are independent of $\lambda$, it
follows that the first order perturbation $\alpha_1$ satisfies homogeneous
boundary condition both at the axis and at infinity, and hence from equation
(\ref{eq:alpha1}) we obtain that  
  \begin{equation}
    \label{eq:41}
    \alpha_1=0. 
  \end{equation}
In contrast, the zero order lapse $\alpha_0$ satisfies non-trivial boundary
conditions. The specific value of $\alpha_0$
will depend, of course, on the choice of background. Remarkably, for Minkowski
and extreme Kerr we have $\alpha_0=\rho$, as we will see in the next
sections. But for non-extreme Kerr it has a different value (see appendix
\ref{sec:kerr-black-hole}). In this section we keep $\alpha_0$ arbitrary in
order to obtain general equations that can be used in future works for
non-extreme black holes.

Using (\ref{eq:41}),  (\ref{eq:beta0chi0}) and (\ref{eq:5})  we find the following useful formulas
\begin{align}
 \label{eq:derivada-psip}
\psi'_1 &=\frac{1}{\alpha_0} \left( \dot \psi_{1}-\beta_{1}^{A}\partial_{A}\psi_{0} \right),\\
 \label{eq:derivada-psipp} 
\psi''_1  &=\dfrac{1}{\alpha^2_{0}}\left( \ddot\psi_{1}-\dot\beta_{1}^{A}\partial_{A}\psi_{0} \right).
\end{align}
Also, as consequence of the definition  (\ref{eq:eta-sigma})  we have   the following relations between  $\eta$ and $\sigma$
\begin{equation}
 \label{eq:eta0sigma0}
 \eta_{0}=\rho^{2}e^{\sigma_{0}}, \quad 
 \eta_{1}=\eta_{0}\sigma_{1}.
\end{equation}

Using these assumptions it is straightforward to obtain the linearization of
the equations presented in section \ref{sec:einstein-equations}. The result is
the following.

\textbf{Evolution equations:}

The evolution equation for $\sigma_1$ and $\omega_1$ are given by
\begin{align}
 \label{eq:evolucion-sigma2}
 -\frac{e^{2u_{0}}}{\alpha_0^2} \dot p+^{(3)}\Delta\sigma_{1}
 + \dfrac{\partial_{A}\sigma_{1}\partial^{A}\bar{\alpha_{0}}}{\bar{\alpha}_{0}} 
 &=\frac{2}{\eta_0^2}\left(\sigma_{1} \arrowvert\partial \omega_{0}\arrowvert^{2} -\partial_{A}\omega_{1}\partial^{A}\omega_{0}\right),\\
 \label{eq:evolucion-omega2}
- \frac{e^{2u_{0}}}{\alpha_0^2} \dot \po   +^{(3)}\Delta \omega_{1}
+\dfrac{\partial_{A}\omega_{1}\partial^{A}\bar{\alpha}_{0}}{\bar{\alpha}_{0}}
& =  4\dfrac{\partial_{\rho}\omega_{1}}{\rho}+ 2\partial_{A}\omega_{1}\partial^{A}\sigma_{0}+2\partial_{A}\omega_{0}\partial^{A}\sigma_{1},
\end{align}
where we have defined the following two useful auxiliary variables
\begin{align}
  \label{eq:43}
\ps &=  \dot \sigma_1 -\beta_1^A\partial_A \sigma_0 
  -2 \frac{\beta^\rho}{ \rho},\\
\po & =    \dot \omega_1 -\beta_1^A\partial_A \omega_0.
\end{align}

The evolution equation for the metric and second fundamental form are given by 
\begin{align}
 \label{eq:evolucionmetrica2}
 2\dot u_{1} &=\partial_{A}\beta_{1}^{A}+2\beta_{1}^{A}\partial_{A}u_{0},\\
 \label{eq:evolucion-chi-2}
\dot \chi_{1AB} &=-\left(F_{1AB}+\alpha_{0}G_{1AB}\right),
 \end{align}
where
\begin{equation}
 \label{eq:F1}
 F_{1AB}= -2\partial_{(A}\alpha_{0}\partial_{B)}u_{1}+\delta_{AB} \partial_{C}\alpha_{0} \partial^{C}u_{1},
\end{equation}
and
\begin{multline}
 \label{G1} G_{1AB}=\dfrac{1}{2\eta_{0}^{2}}\left(\partial_{A}\eta_{1}\partial_{B}\eta_{0}+\partial_{A}\eta_{0}\partial_{B}\eta_{1}
 +\partial_{A}\omega_{1}\partial_{B}\omega_{0}+\partial_{A}\omega_{0}\partial_{B}\omega_{1}\right)\\
 -\dfrac{\sigma_{1}}{\eta_{0}^{2}}\left(\partial_{A}\eta_{0}\partial_{B}\eta_{0}+\partial_{A}\omega_{0}\partial_{B}\omega_{0}\right)\\
 -\dfrac{\delta_{AB}}{2}\left[\dfrac{1}{\eta_{0}^{2}}\left(\partial_{C}\eta_{0}\partial^{C}\eta_{1}+\partial_{C}\omega_{0}
 \partial^{C}\omega_{1}\right)
 -\dfrac{\sigma_{1}}{\eta_{0}^{2}}\left(|\partial\eta_{0}|^2+|\partial\omega_{0}|^2\right)\right].
\end{multline}

\textbf{Constraint equations:}

The momentum constraint and Hamiltonian constraints are given by
\begin{align}
 \label{eq:momento 2}
 \partial^{B}\chi_{1AB} &=
 -\frac{e^{2u_{0}}}{2\alpha_0}  \left(  p \left(\partial_A \sigma_0 + 2\frac{\partial_A \rho}{\rho} \right)
 + \frac{ \partial_{A}\omega_{0}}{\eta_0^2}d \right),\\
 \label{hamiltoniano 2}
 ^{(3)}\Delta\sigma_{1}+\Delta q_{1} &=-\dfrac{\den_{1}}{4\rho},
\end{align}
where $\den_{1}$ is the first order term of  the energy density
(\ref{eq:epsilon}), that is 
\begin{equation}
 \label{eq:epsilon2}
 \den_1=\left( 2\partial_{A}\sigma_{0}\partial^{A}\sigma_{1} +\dfrac{2\partial_{A}\omega_{0}\partial^{A}\omega_{1}}{\eta_{0}^{2}}-\dfrac{2\sigma_{1}|\partial\omega_{0}|^2}{\eta_{0}^{2}}\right)\rho.
 \end{equation}

\textbf{Gauge equations:}

We have seen that the first order lapse is zero. For the shift we have
\begin{eqnarray}
 \label{eq:beta 2}
 \left(\ck\beta_1\right)^{AB}=2e^{-2u_{0}}\alpha_{0} \chi_{1}^{AB}.
 \end{eqnarray}

\vspace{1cm}

We have presented above the complete set of axially symmetric linear equations
in the maximal--isothermal gauge. The conserved energy for this
system of equation is calculated from the second variation of the energy
density (\ref{eq:epsilon}) as follows. Assume that $\psi(\lambda)$ has the following form
 \begin{equation}
   \label{eq:44}
   \psi(\lambda)=\psi_0 + \lambda \psi_1. 
 \end{equation}
That it, we assume that the second order derivative with respect to $\lambda$
of $\psi(\lambda)$ vanishes at $\lambda=0$. For this kind of linear
perturbations we define the second variation of $\den$ as
\begin{equation}
  \label{eq:45}
  \den_2=\left.\frac{d^2 \den(\lambda)}{d \lambda^2}\right|_{\lambda=0}.
\end{equation}
Using  (\ref{eq:epsilon}) we obtain
\begin{multline} 
 \den_{2}=\left( \dfrac{2e^{2u_{0}}}{\alpha_0^2}\left(
 p^2
 +\frac{d^2}{\eta_0^2}\right)-8\dfrac{\sigma_{1}\partial_{A}\omega_{0}\partial^{A}\omega_{1}}{\eta_{0}^{2}} +
\right.
\\
\left. +2|\partial\sigma_{1}|^2
 +4e^{-2u_{0}}{\chi}^{AB}_{1}\chi_{1AB}
 +2\dfrac{\arrowvert\partial\omega_{1}\arrowvert^{2}}
{\eta_{0}^{2}}+4\dfrac{\arrowvert\partial\omega_{0}\arrowvert^{2}}{\eta_{0}^{2}}\sigma_{1}^{2}\right)\rho.
\end{multline}
Note that $\den_2$, in contrast with $\den$, is not positive definite. 

For further reference we write also the zero order expression for the energy density
\begin{equation}
 \label{eq:epsilon0}
 \den_0 = \left(|\partial
 \sigma_{0}|^2 + \frac{|\partial \omega_{0}|^2}{\eta_{0}^2}\right)\rho, 
\end{equation}
and the masses associated with  the different orders of the energy density
\begin{align}
  \label{eq:m0}
  m_0 =\frac{1}{16}\int_{\Rdm} \den_0 \, d\rho dz,\\  
m_1 =\frac{1}{16} \int_{\Rdm} \den_1 \, d\rho dz, \label{eq:m1} \\
 m_2 =\frac{1}{16}\int_{\Rdm} \den_2 \, d\rho dz. \label{eq:m2}
\end{align}
Recall that $\den_1$ has been calculated in (\ref{eq:epsilon2}).

We will prove that $m_1$ vanished and that $m_2$ is  conserved and positive
definite. Since we are interested in the study of linear stability, it
is important for our present purpose (and also for future works on this
subject) to prove these statements using only the linear equations, without
referring to the original non-linear system. In the next sections we will perform
these proofs.  However, from the conceptual point of view and for further
possible applications to the non-linear stability problem, it is important also
to deduce these properties from the full equations. We discuss this point
bellow.

Consider a general one-parameter family of exact solutions $\psi(\lambda)$
(i.e. we are not assuming the particular linear form (\ref{eq:44})). 
For this family we compute the exact mass $m(\lambda)$ given
  by equation \eqref{eq:masa}. This quantity  is conserved,
  that is
\begin{equation}
  \label{eq:1w}
  \frac{d m(\lambda)}{dt}=0,
\end{equation}
This equation is valid for all $\lambda$. Taking derivatives with respect
to $\lambda$ of equation (\ref{eq:1w}) and then evaluating them in $\lambda=0$
we  obtain that
\begin{align}
\frac{d}{dt} \left. m  \right|_{\lambda=0}=0,  \label{eq:2a}\\
\frac{d}{dt} \left. \frac{d m}{d \lambda}\right|_{\lambda=0}= 0,  \label{eq:2b}\\
\frac{d}{dt} \left. \frac{d^2 m}{d \lambda^2}\right|_{\lambda=0}= 0.\label{eq:2c}
\end{align}
We can, of course, take more derivatives with respect to $\lambda$, but this
will not provide any useful conserved quantity for the linear equations.

It is clear that equations (\ref{eq:2a}) and (\ref{eq:2b}) are precisely
  \begin{align}
\frac{d m_0}{dt} =0, \label{eq:57a} \\
\frac{d m_1}{dt} =0,  \label{eq:57b}
\end{align}
where $m_0$ and $m_1$ are given by (\ref{eq:m0}) and (\ref{eq:m1})
respectively. 
 
The first equation (\ref{eq:57a}) asserts that the mass of the background
metric is conserved. This is of course valid even when the background solution
is not stationary. In our case, since the background metric is stationary, not
only $m_0$ is conserved but also the integrand $\varepsilon_0$, given by
equation (\ref{eq:epsilon0}), is time independent, and hence the conservation
(\ref{eq:57a}) is trivial.

Since  $m_1$  depends  only on  the background 
solution $\psi_0$ and the first order perturbation $\psi_1$ (recall that $\psi_0$ and
$\psi_1$ are defined by (\ref{eq:37}) for a general family $\psi(\lambda)$)
then equation (\ref{eq:57b}) asserts that $m_1$ is a conserved quantity
for the linear equations. That is, from the exact conservation law (\ref{eq:1w})
we have deduced the conservation of $m_1$ for the linear equations. 

For a general background, $m_1$ will be non-zero. However, using the
Hamiltonian formulation of General Relativity, it is possible to show that the
first variation of the ADM mass vanishes on stationary solutions (see
\cite{Bartnik04} and reference therein).  In section \ref{s:kerr} we explicitly
perform this computation adapted to our settings. 

For the third equation (\ref{eq:2c}) the situation is different. This equation
asserts that the quantity
\begin{equation}
  \label{eq:4mm}
  \hat m_2 = \left. \frac{d^2 m}{d \lambda^2}\right|_{\lambda=0},
\end{equation}
is conserved
\begin{equation}
  \label{eq:57c}
  \frac{d \hat m_2}{dt} =0.
\end{equation}
However, $\hat m_2$ depends on the background solution $\psi_0$, the linear perturbation $\psi_1$
but also on the second order perturbation
\begin{equation}
  \label{eq:5b}
 \psi_2 =\left.\frac{d^2 \psi(\lambda)}{d \lambda^2}\right|_{\lambda=0}.
\end{equation}
Then  $\hat m_2$ is not a quantity that can be computed purely in
terms of the background solution $\psi_0$  and the linear perturbation $\psi_1$
and hence it can not  be used for the linearized equations. 

Note that the mass $m_2$ defined in (\ref{eq:m2}) is computed only using first
order perturbations (since we have assumed (\ref{eq:44}) to compute it). In principle, $m_2$
and $\hat m_2$ are different quantities.  Hence the conservation law
\begin{equation}
  \label{eq:6mm}
  \frac{d m_2}{dt}=0,
\end{equation}
can not   be deduced directly from (\ref{eq:57c}). But, as we will prove
bellow, it turns out that if the background is stationary and hence the first
variation $m_1$ vanishes, then we have $\hat m_2=m_2$. 

Let us compute explicitly $\hat m_2$. We define 
\begin{equation}
  \label{eq:45b}
 \hat  \den_2=\left.\frac{d^2 \den(\lambda)}{d \lambda^2}\right|_{\lambda=0}.
\end{equation}
We emphasize that in (\ref{eq:45b}) we are not assuming (\ref{eq:44})
and hence this is different from  (\ref{eq:45}). The difference between
$\den_2$ and $\hat \den_2$ is given by
\begin{equation}
  \label{eq:57hat}
  \hat  \den_2-\den_2=\left( 2\partial_{A}\sigma_{0}\partial^{A}\sigma_{2}
    +\dfrac{2\partial_{A}\omega_{0}\partial^{A}\omega_{2}}{\eta_{0}^{2}}-\dfrac{2\sigma_{2}|\partial\omega_{0}|^2}{\eta_{0}^{2}}\right)\rho. 
\end{equation}
In this calculation we have assumed that the background is stationary in this
gauge (namely, we have assumed (\ref{eq:5}) and (\ref{eq:beta0chi0})). The
difference between $\den_2$ and $\hat \den_2$ involves, of course, the second
order perturbation $\sigma_2$ and $\omega_2$. However, remarkably, the right
hand side of (\ref{eq:57hat}) has exactly the same for as the first variation
$\den_1$ if we replace $\sigma_1$ and $\omega_1$ in $\den_1$ (given by
(\ref{eq:epsilon2})) by $\sigma_2$ and $\omega_2$.  Hence, if $m_1$ vanishes on
stationary solutions then $\hat m_2=m_2$ (that is, the integral of the right
hand side of (\ref{eq:57hat}) vanishes).  In fact, this result is general
and well known in the calculus of variations with non-linear variations (see,
for example, \cite{Giaquinta96} p. 267).

Finally, let us discuss the sign of the second variation $m_2$.  On Minkowski,
the positive mass theorem clearly implies that the second variation of the mass
should be positive since flat space is a global minimum of the mass.  In the
extreme Kerr case there is no obvious connection between the positivity of the
mass and the second variation. However, it has been proved that the mass has a
minimum at extreme Kerr under variations with fixed angular momentum
\cite{Dain05d}\cite{Dain06c}. To prove the positivity of the second variation
$m_2$ on extreme Kerr in section \ref{s:kerr} we will use similar techniques as
in those references.  As we pointed out above, for our purpose, it is
important to prove this in terms only of the linearized equations.

\section{Minkowski perturbations}
\label{minkowki}

The natural first application of the linear equations obtained in section
\ref{linealizacion} is to study the linear stability of Minkowski in axial
symmetry. The problem of linear stability of Minkowski, without any symmetry
assumptions, was solved in \cite{christodoulou90} and the non-linear
stability of Minkowski was finally proved in \cite{Christodoulou93}. The
purpose of this section in to provide an alternative proof of the linear
stability of Minkowski in axial symmetry using the gauge presented in the
previous section. This is given in theorem \ref{t:minkowski} which constitutes
the main result of this section. 

 In comparison with the results in \cite{christodoulou90}, theorem
 \ref{t:minkowski} has the obvious disadvantage that it only applies to
axially symmetric perturbation. Moreover in this theorem only pointwise boundedness
of the solution is proved and not precise  decay rates as in
\cite{christodoulou90}. However, the advantage of this result is that it make
use only of energy estimates that can be generalized to the black hole case as
we will see in section \ref{s:kerr}. 

This system of linear equations was studied numerically in \cite{Dain:2009wi} and
analytically in \cite{Dain:2010ne}. The main difficulty is that the system is
formally singular at the axis where $\rho=0$. Theorem \ref{t:minkowski}
generalize those works by including the twist and, more important, by obtaining
a pointwise estimate of the solution in terms of conserved energies. We explain
in more detail this point bellow.

The  Minkowski background satisfies  the assumptions (\ref{eq:beta0chi0}). The
value of the other background quantities are the following 
\begin{equation}
  \label{background0}
  \omega_{0} = q_{0}= \sigma_{0}=0, 
\end{equation}
and 
\begin{equation}
  \label{background1}
 u_{0}=\ln\rho,\quad  \eta_{0} = \rho^{2}, \quad \alpha_{0}=\rho.
\end{equation}
Introducing the background quantities (\ref{background0})--(\ref{background1})
on the linearized equations obtained in section \ref{linealizacion} we arrive
at the following set of equations for the linear axially symmetric
perturbations of Minkowski.

\textbf{Evolution equations:}

The evolution equations for $\sigma_{1}$ and $\omega_{1}$ are given by
\begin{align}
 \label{eq:evo-sigma-mini}
 - \dot p + \Ldt\sigma_{1} &=0,\\
 \label{eq:evo-omega-min}
 -\ddot{\omega}_{1} +\Ldt \omega_{1}&=4\dfrac{\partial_{\rho}\omega_{1}}{\rho},
\end{align}
where we defined the auxiliary function $p$ by
\begin{equation}
 \label{eq:p-minkowski}
 p=\dot{\sigma}_{1}-\dfrac{2\beta_{1}^{\rho}}{\rho}.  
\end{equation}

The evolution equations for the metric and the extrinsic curvature are given by
\begin{align}
\label{eq:evolucion-u-minkowski} 
 2\dot{u}_{1}&=\partial_{A}\beta^{A}_{1}+2\dfrac{\beta^{\rho}_{1}}{\rho},\\
 \label{eq:evolucion-chi-minkowski}
 \dot{\chi}_{1AB} &=2\partial_{(A}q_1\partial_{B)}\rho-\delta_{AB}\partial_{\rho}q_1.
\end{align}

\textbf{Constraint equations:}

The momentum and the Hamiltonian constraints takes the following form
\begin{align}
 \label{eq:momento-minkowski}
 \partial^{A}\chi_{1AB} &=-p\partial_{B}\rho, \\
 \label{eq:hamiltoniano minkowski}
 \Delta q_{1}+^{(3)}\Delta\sigma_{1} &=0.
\end{align}

\textbf{Gauge equations for lapse and shift:}

We have proved in section \ref{linealizacion} that the first order lapse is
zero.  The equation for the shift is given by
\begin{equation}
 \label{eq:beta-minkowski}
 (\ck\beta_{1})^{AB}=\dfrac{2}{\rho} \chi_{1}^{AB}. 
\end{equation}

For the mass density we have that
\begin{equation}
  \label{eq:4}
  \den_0=\den_1=0,
\end{equation}
and hence we have
\begin{equation}
  \label{eq:61}
   m_0=m_1=0.
\end{equation}
The second order mass density is given by
\begin{equation}
\label{eq:density-flat}
 \den_{2}=\left(2p^{2}+2\dfrac{\dot
     {\omega}_{1}^{2}}{\rho^{4}}+2\arrowvert\partial\sigma_{1}\arrowvert^{2}+2\dfrac{\arrowvert\partial 
   \omega_{1}\arrowvert^{2}}{\rho^{4}}+4\dfrac
 {\chi_{1}^{AB}\chi_{1AB}}{\rho^{2}}\right) \rho. 
\end{equation}
It is important to note that $\den_2$, in the particular case of the Minkowski
background, is positive definite.
 
Before presenting the main result, let us first discuss two simple but
important properties of this set of equations. The first one (which only holds
for the Minkowski background) is that the equation for the twist $\omega_1$
(\ref{eq:evo-omega-min}) decouples completely from the other equations
\footnote{We thank O. Rinne for pointing this out to us before this work was
  started.}. Then, it is useful to split the density $\varepsilon_2$ in two
terms
\begin{equation}
  \label{eq:6}
  \varepsilon_{2}=\es+\eo,
\end{equation}
where
\begin{align}
  \label{eq:7}
  \es & =\left(2p^{2}+2\arrowvert\partial\sigma_{1}\arrowvert^{2}+4\dfrac{
   \chi_{1}^{AB}\chi_{1AB}}{\rho^{2}}\right)\rho,\\ 
\eo &=  2\dfrac{\dot
   {\omega}_{1}^{2}}{\rho^{3}} +2\dfrac{\arrowvert\partial
   \omega_{1}\arrowvert^{2}}{\rho^{3}},
\end{align}
and the corresponding masses
\begin{equation}
  \label{eq:8}
  m_2=\ms +\mo, 
\end{equation}
where
\begin{equation}
  \label{eq:9}
 \ms=\int_{\mathbb{R}_{+}^{2}} \es  \, d\rho dz,
 \quad  \mo =\int_{\mathbb{R}_{+}^{2}} \eo \, d\rho dz.
\end{equation}
Note that all the densities are positive definite.

Equation (\ref{eq:evo-omega-min}) is equivalent to the following homogeneous
wave equation
\begin{equation}
  \label{eq:38}
  -\ddot{\bar{\omega}}_1 + ^{(7)} \Delta \bar \omega_1=0,
\end{equation}
where $^{(7)} \Delta$ is the Laplacian in 7-dimensions acting on axially
symmetric functions \footnote{The trick of writing the 2-dimensional equations
  that appears in axially symmetric (which are formally singular at the axis)
  as regular equations in higher dimensions has provided to be very useful. It
  has been used, in a similar context, in \cite{Weinstein90} and
  \cite{Andreasson:2012cq}.}, namely
\begin{equation}
 \label{eq:9 minkowski}
 ^{(7)}\Delta 
\bar \omega_1 =\Delta\bar \omega_1 +5\dfrac{\partial_{\rho} \bar\omega_1}{\rho},
\end{equation}
and we have defined
\begin{equation}
  \label{eq:39}
  \bar \omega_1 =\frac{\omega_1}{\rho^4}. 
\end{equation}
That is, the dynamic of the twist potential is determined by a wave equation
and hence it is clear how to obtain decay estimates for the solution. In
contrast, the equations for $\sigma_1$ are coupled and non-standard due to the
formal singular behaviour at the axis (see the discussion in \cite{Dain:2009wi}
and \cite{Dain:2010ne}).  The wave equation (\ref{eq:38}) has associated the
canonical energy density
\begin{equation}
  \label{eq:40}
  \epsilon_{\bar\omega}= 2\left(  \dot{\bar{\omega}}^2_1 +|\partial
      \bar{\omega}_1|^2 \right)\rho^5,
\end{equation}
and corresponding energy
\begin{equation}
  \label{eq:48}
  m_{\bar\omega}=\int_{\Rt} \epsilon_{\bar\omega} \, d\rho dz.
\end{equation}
The factor $\rho^5$ in (\ref{eq:40}) comes from the expression of the volume
element in 7-dimensions in terms of the cylindrical coordinates $dx^7=\rho^5
d\rho dz$. The two densities $\den_{\bar\omega}$ and $\eo$ are related by a
boundary term
\begin{equation}
  \label{eq:49}
  \epsilon_{\bar\omega}-\eo=-4 \partial_\rho \left(\frac{\omega^2_1}{\rho^4}\right),
\end{equation}
and hence $m_{\bar\omega}=\mo$ provided $\omega_1$ satisfies appropriate
boundary conditions. Note that equation (\ref{eq:38}) suggests that
$\bar \omega_1$ and not $\omega_1$ is the most convenient variable to impose
the boundary conditions.

The second property (which will be also satisfied for the Kerr background and
in general for any stationary background) is the following. The coefficients of
the equations do not depend on time, hence if we take a time derivative to all
equations we get a new set of equations for the time derivatives of the
unknowns which are formally identical to the original ones. That is, the
variables $\sigma_1,\omega_1,u_1, \beta_1,\chi_1$ satisfy the same equations as
the time derivatives $\dot \sigma_1,\dot \omega_1,\dot u_1, \dot \beta_1,\dot
\chi_1$.  And the same is of course true for any number of time derivatives. In
particular, if $m$ is a conserved quantity, then we automatically get an
infinity number of conserved quantities which has the same form as $m$ but in
terms of the $n$-th time derivatives of $\sigma_1,\omega_1,u_1,
\beta_1,\chi_1$. For example, let us consider the mass $\ms$ defined by
(\ref{eq:7}) and (\ref{eq:9}). It depends on the the functions $p$, $\sigma_1$
and $\chi_1$, to emphasize this dependence we use the notation
$\ms[p,\sigma_1,\chi_1]$. Then we define $\ms[\dot p,\dot \sigma_1,\dot
\chi_1]$ as
\begin{equation}
  \label{eq:32}
\ms[\dot p,\dot \sigma_1,\dot \chi_1]=\int_{\mathbb{R}_{+}^{2}}  \left( 2\dot
  p^{2}+2\arrowvert\partial\dot \sigma_{1}\arrowvert^{2}+4\dfrac{
  \dot  \chi_{1}^{AB}\dot \chi_{1AB}}{\rho^{2}} \right)  \rho d\rho dz.  
\end{equation}
If $\ms[p,\sigma_1,\chi_1]$ is conserved along the evolution then also
$\ms[\dot p,\dot \sigma_1,\dot \chi_1]$ is conserved. 
The same applies for $\mo[\omega_1]$ and $m_{\bar \omega}[\bar \omega_1]$, for
example we have 
\begin{equation}
  \label{eq:35}
  m_{{\bar \omega}} [\dot{\bar \omega}_1] =\int_{\mathbb{R}_{+}^{2}} \left(\ddot{\bar{\omega}}^2_1 +|\partial
    \dot{\bar{\omega}}_1|^2 \right)\,\rho^5 d\rho dz.
\end{equation}
We will make use also of the higher order masses
$m_{\bar\omega}[\ddot{\bar\omega}_1]$ and
$m_{\bar\omega}[\dddot{\bar\omega}_1]$.

\begin{theorem}
 \label{t:minkowski}
 Consider a smooth solution of the linearized equations presented above that
 satisfies the fall off conditions at infinity (\ref{eq:condicionesinfinito})
 and the regularity conditions at the axis (\ref{eq:condicionesparidad1}),
 (\ref{eq:condicionesparidad2}). Assume also that
\begin{equation}
  \label{eq:65}
 \dot{\bar\omega}_1, \bar\omega_1= O_1(1),  
\end{equation}
at the axis and
\begin{equation}
  \label{eq:75}
  \dot{\bar\omega}_1, \bar\omega_1= o_1(r^{-5/2}), 
\end{equation}
at infinity, where we have defined
\begin{equation}
  \label{eq:76infM}
  \bar\omega_1=\frac{\omega_1}{\rho^4}.
\end{equation}

 Then, we have:
\begin{itemize}
\item[(i)] The masses $\ms$, $\mo$ and $m_{\bar \omega}$ defined by
  (\ref{eq:9}) and (\ref{eq:48}) are conserved along de evolution and $\mo=m_{\bar
    \omega}$. And hence, all higher order masses are also conserved. 

\item[(ii)] The solution $\sigma_1, \omega_1$ satisfy the following (time
  independent) bounds   
  \begin{align}
    \label{eq:1}
   C |\sigma_1|&\leq \ms[p,\sigma_1,\chi_1]+ \ms[\dot p,\dot \sigma_1,\dot \chi_1], \\
 C\frac{|\omega_1| }{\rho^4} &\leq  \mo[\ddot \omega_1]+\mo[\dddot \omega_1] , \label{eq:1b}
  \end{align}
where $C>0$ is a numerical constant. 
\end{itemize}

\end{theorem}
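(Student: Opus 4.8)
I would split the argument along the decoupling observed above: treat the twist field $\omega_1$ (equivalently $\bar\omega_1=\omega_1/\rho^{4}$) and the norm field $\sigma_1$ separately. For the conservation statements in (i) the twist part is the energy identity of the $7$-dimensional wave equation \eqref{eq:38}, and the $\sigma_1$ part is a divergence identity built out of the linearized constraints. For the pointwise bounds in (ii) the idea is to regard $\sigma_1$ and $\bar\omega_1$ as axisymmetric functions on $\mathbb R^{3}$ and $\mathbb R^{7}$, to note that the conserved masses at hand control exactly the homogeneous Sobolev norms $\dot H^{1}\cap\dot H^{2}$ on $\mathbb R^{3}$ and $\dot H^{3}\cap\dot H^{4}$ on $\mathbb R^{7}$, and to invoke the (borderline) embeddings of these intersections into $L^{\infty}$; this last point is the content of the appendix lemma.

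\textbf{Part (i).} For the twist: since $\bar\omega_1$ solves \eqref{eq:38}, multiplying by $\dot{\bar\omega}_1$ and using ${}^{(7)}\!\Delta\bar\omega_1\,\rho^{5}=\partial_A(\rho^{5}\partial^A\bar\omega_1)$ gives $\partial_t\epsilon_{\bar\omega}=4\,\partial_A(\dot{\bar\omega}_1\,\rho^{5}\partial^A\bar\omega_1)$, with $\epsilon_{\bar\omega}$ as in \eqref{eq:40}. Integrating over $\Rdm$, the axis contribution vanishes because of the $\rho^{5}$ weight together with \eqref{eq:65}, and the contribution from infinity vanishes by \eqref{eq:75}; hence $m_{\bar\omega}$ is conserved. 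Integrating the pointwise identity \eqref{eq:49} over $\Rdm$, with the same boundary behaviour, gives $\mo=m_{\bar\omega}$, so $\mo$ is conserved too. For the norm: I would compute $\partial_t\es$ from \eqref{eq:7} and rewrite it as a pure spatial divergence $\partial_AJ^A$. Substituting $\dot p=\Ldt\sigma_1$ from \eqref{eq:evo-sigma-mini}, the expression for $\dot\chi_{1AB}$ from \eqref{eq:evolucion-chi-minkowski} (which equals $(\ck V)_{AB}$ with $V^A=q_1\partial^A\rho$), the momentum constraint \eqref{eq:momento-minkowski}, the Hamiltonian constraint $\Delta q_1+\Ldt\sigma_1=0$, and the shift equation \eqref{eq:beta-minkowski}, the bulk terms cancel in pairs; integrating $\partial_AJ^A$ over $\Rdm$ and using the fall-off \eqref{eq:condicionesinfinito} and the parity conditions \eqref{eq:condicionesparidad1}--\eqref{eq:condicionesparidad2} makes the flux vanish, so $\ms$ is conserved. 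Finally, since the coefficients of the system are $t$-independent, $\partial_t^{n}$ of a solution is again a solution obeying the same boundary conditions (this is why \eqref{eq:65}--\eqref{eq:75} are also imposed on $\dot{\bar\omega}_1$, the higher ones following from smoothness and the equations), so the higher order masses $\ms[\dot p,\dot\sigma_1,\dot\chi_1]$, $\mo[\ddot\omega_1]$, $\mo[\dddot\omega_1]$, $\dots$ are conserved as well.

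\textbf{Part (ii).} From \eqref{eq:7}, $\ms$ bounds $\int|\partial\sigma_1|^{2}\rho\,d\rho dz\simeq\|\sigma_1\|_{\dot H^{1}(\mathbb R^{3})}^{2}$, and by \eqref{eq:evo-sigma-mini} we have $\dot p=\Ldt\sigma_1=\Delta_{\mathbb R^{3}}\sigma_1$, so $\ms[\dot p,\dot\sigma_1,\dot\chi_1]$ bounds $\int(\dot p)^{2}\rho\,d\rho dz\simeq\|\Delta_{\mathbb R^{3}}\sigma_1\|_{L^{2}}^{2}\simeq\|\sigma_1\|_{\dot H^{2}(\mathbb R^{3})}^{2}$; the parity conditions \eqref{eq:condicionesparidad1} make the extension of $\sigma_1$ to $\mathbb R^{3}$ regular across the axis. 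The Sobolev-type estimate of the appendix --- in essence $\|f\|_{L^{\infty}(\mathbb R^{3})}\lesssim\|\nabla f\|_{L^{2}}^{1/2}\|\Delta f\|_{L^{2}}^{1/2}$, i.e.\ $\dot H^{1}(\mathbb R^{3})\cap\dot H^{2}(\mathbb R^{3})\hookrightarrow L^{\infty}$ --- then yields \eqref{eq:1}. For the twist, $\bar\omega_1$ solves \eqref{eq:38} and, the equation having $t$-independent coefficients, $\ddot{\bar\omega}_1={}^{(7)}\!\Delta\bar\omega_1$ and $\partial_t^{4}\bar\omega_1=({}^{(7)}\!\Delta)^{2}\bar\omega_1$. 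Hence the conserved masses $\mo[\ddot\omega_1]=m_{\bar\omega}[\ddot{\bar\omega}_1]$ and $\mo[\dddot\omega_1]=m_{\bar\omega}[\dddot{\bar\omega}_1]$ (using the analogues of \eqref{eq:49}) control $\int|\partial\ddot{\bar\omega}_1|^{2}\rho^{5}\,d\rho dz\simeq\|\bar\omega_1\|_{\dot H^{3}(\mathbb R^{7})}^{2}$ and $\int(\partial_t^{4}\bar\omega_1)^{2}\rho^{5}\,d\rho dz\simeq\|\bar\omega_1\|_{\dot H^{4}(\mathbb R^{7})}^{2}$; the embedding $\dot H^{3}(\mathbb R^{7})\cap\dot H^{4}(\mathbb R^{7})\hookrightarrow L^{\infty}(\mathbb R^{7})$ (borderline, since $n/2-1<3<n/2$ for $n=7$), proved by a Fourier splitting into low and high frequencies, then gives \eqref{eq:1b}.

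\textbf{Main obstacle.} The genuine work is the divergence identity behind the conservation of $\ms$: one has to exhibit $J^A$ explicitly and verify that every bulk term cancels, which forces a coordinated use of the momentum constraint, the Hamiltonian constraint and the shift equation --- it is the one step that is not a packaged wave-equation computation. The rest is bookkeeping: checking that the prescribed fall-off and parity conditions are precisely enough both to annihilate all the boundary and flux integrals above and to place $\sigma_1$, $\bar\omega_1$ and their time derivatives in the homogeneous Sobolev spaces used in (ii); the appendix Sobolev-type lemma is tailored to bridge exactly this last gap.
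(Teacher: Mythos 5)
Your overall strategy coincides with the paper's: conservation is obtained by writing $\dot\varepsilon_\sigma$ and $\dot\varepsilon_\omega$ as spatial divergences whose fluxes are annihilated by the parity and fall-off conditions, and the pointwise bounds follow by reading the conserved masses as homogeneous Sobolev norms of $\sigma_1$ on $\mathbb{R}^3$ and of $\bar\omega_1$ on $\mathbb{R}^7$ and invoking the appendix estimate (Lemma \ref{l:estimaten} with $n=3$, $k=2$ and with $n=7$, $k=4$). Part (ii) and the twist half of part (i) are essentially identical to the paper's argument.

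The one concrete problem is the divergence identity behind the conservation of $\ms$, which you rightly single out as the real work but then only assert (``the bulk terms cancel in pairs'') with an ingredient list that is not the one that produces the cancellation. The paper's computation uses only: the definition of $p$ (so that $4\rho p\dot p=4\rho\dot\sigma_1\dot p-8\beta_1^\rho\dot p$), the evolution equation \eqref{eq:evo-sigma-mini} (to turn $4\rho\,\partial_A\sigma_1\partial^A\dot\sigma_1$ into a divergence minus $4\rho\dot\sigma_1\dot p$), the shift equation \eqref{eq:beta-minkowski} together with trace-freeness (to write $8\chi_1^{AB}\dot\chi_{1AB}/\rho=8\,\partial^A\beta_1^B\,\dot\chi_{1AB}$), and the \emph{time derivative} of the momentum constraint \eqref{eq:momento-minkowski} (which, after one more integration by parts, yields $+8\dot p\,\beta_1^\rho$ and closes the cancellation). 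Neither the Hamiltonian constraint nor the evolution equation \eqref{eq:evolucion-chi-minkowski} for $\chi_{1AB}$ enters, and the paper remarks on this explicitly at the end of the section. If instead you substitute \eqref{eq:evolucion-chi-minkowski} directly into $8\chi_1^{AB}\dot\chi_{1AB}/\rho$, you obtain a term of the form $16\chi_1^{\rho B}\partial_B q_1/\rho$, and integrating by parts against the momentum constraint leaves bulk terms proportional to $q_1 p/\rho$ and $q_1\chi_{1\rho\rho}/\rho^2$ that have no partners among the remaining pieces of $\dot\varepsilon_\sigma$; the pairwise cancellation you announce does not occur along that route. The fix is exactly the substitution order above: convert $\chi_1^{AB}/\rho$ via the shift equation first, then use $\partial^A\dot\chi_{1AB}=-\dot p\,\partial_B\rho$. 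With that correction your proposal reproduces the paper's proof.
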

The value of $\omega$ at the axis determines the angular momentum (see, for
example, \cite{Dain06c}). Hence, the physical interpretation of the boundary
conditions (\ref{eq:65}) is that the perturbations do not change the angular
momentum of the background (which is zero in the case of Minkowski).   

The conservation of $\ms$ in point (i) was proved in \cite{Dain:2009wi}  . For
completeness we review this proof and also we perform it in different variables
which are the appropriate ones for the extreme Kerr black hole case treated in
the next section.

We have already shown that the equation for $\omega_1$ is decoupled and it can
be converted into an standard wave equation in higher dimensions. Hence the
dynamics of $\omega_1$ is well known. In particular one has the classical
pointwise estimates for solutions of the wave equation in 7-dimensions $|\bar \omega_1| \leq
t^{-3} C$, where the constant $C$ depends only on the initial data (see, for
example, \cite{shatah98}). We present the weaker estimate (\ref{eq:1b}) because
it can be proved using only the conserved energies and is likely to be useful
in the more complex case of the Kerr black hole, where the pure wave equations
estimates are not available.

The most important part of theorem \ref{t:minkowski} is the estimate
(\ref{eq:1}). In a previous work \cite{Dain:2010ne} the existence of solution
of this set of equations was proved using an explicit (but rather complicated)
representation in terms of integral transforms. In contrast, the a priori
estimate (\ref{eq:1}) is proved in terms of only the conserved masses in a
remarkably simple way.  This estimate is expected to be useful in the
non-linear regime.

\begin{proof}

(i) Since the equations are decoupled, we can treat the conservation for $\ms$
and $\mo$ separately. We begin with $\ms$. Taking the time derivative of $\es$ we obtain
\begin{equation}
\label{eq:m-sub-sigma-minkowski}   
\dot \den_\sigma= 4\rho p\dot{p}+4\rho \partial_{A}\sigma_{1}\partial^{A}\dot{\sigma}_{1}+8\dfrac{
    \chi_{1}^{AB}\dot{\chi}_{1AB}}{\rho}.
\end{equation}
The strategy is to prove (using the linearized equations) that the right hand
side of (\ref{eq:m-sub-sigma-minkowski}) is a total divergence and hence it
integrates to zero (under appropriate boundary conditions).  We calculate each
terms individually.

For the first term we just use the definition of $p$
given in equation (\ref{eq:p-minkowski}) to obtain
\begin{equation} 
  4\rho p \dot p = 4\rho \dot \sigma_1 \dot p -8\beta_1^\rho \dot p . \label{eq:11b}
\end{equation}

For the second term we obtain
\begin{align}
  \label{eq:12}
  4\rho \partial_{A}\sigma_{1}\partial^{A}\dot{\sigma}_{1} &=4\partial^A\left(\rho
    \dot \sigma_1 \partial_A \sigma_1\right)- 4 \dot \sigma_1 \partial^A
  \left(\rho \partial_A\sigma_1  \right),\\
&=4\partial^A\left(\rho
    \dot \sigma_1 \partial_A \sigma_1\right)-4 \rho \dot \sigma_1\,  \Ldt
  \sigma_1,\label{eq:12b}\\
&=4\partial^A\left(\rho
    \dot \sigma_1 \partial_A \sigma_1\right)-4 \rho \dot \sigma_1  \dot{p},\label{eq:12c}
\end{align}
where in line \eqref{eq:12b} we have used the definition of the operator $\Ldt$
given in equation \eqref{eq:Ldt} and in line \eqref{eq:12c} we have used equation  (\ref{eq:evo-sigma-mini}).

Finally, for the third term we have
\begin{align}
  \label{eq:13}
  8\dfrac{\chi_{1}^{AB}\dot{\chi}_{1AB}}{\rho} &=4
  (\mathcal{L}\beta_1)^{AB}\dot{\chi}_{1AB},\\
&= 8  \partial^A \beta_1^{B}\dot{\chi}_{1AB},\label{eq:13b}\\
&= 8\partial^A\left( \beta_1^{B}\dot{\chi}_{1AB}  \right)-8
\beta_1^{B} \partial^A \dot{\chi}_{1AB},\label{eq:13c}\\
&= 8\partial^A\left( \beta_1^{B}\dot{\chi}_{1AB}  \right)+8 \dot p \beta_1^\rho,
\label{eq:13d}
\end{align}
where in line (\ref{eq:13}) we have used  the gauge equation
(\ref{eq:beta-minkowski}), in line (\ref{eq:13b}) the fact that $\chi_{1AB}$ is
trace-free and in line (\ref{eq:13c}) we have used the time derivative of
equation (\ref{eq:momento-minkowski}). 

Summing these results we see that only the total divergence  terms remain. That is
\begin{equation}
  \label{eq:14}
    \dot{\varepsilon}_\sigma=\partial_A t^A,
\end{equation}
where
\begin{equation}
  \label{eq:15}
  t_A=4\rho \dot \sigma_1 \partial_A \sigma_1 + 8 \beta_1^B \dot \chi_{1AB}.
\end{equation}
We integrate (\ref{eq:14}) in the half disk $D_L$ of radius $L$ in $\Rdm$,
where $C_L$ denote the semi-circle of radius $L$, see figure \ref{fig:1}.
\begin{figure}
  \centering
\includegraphics{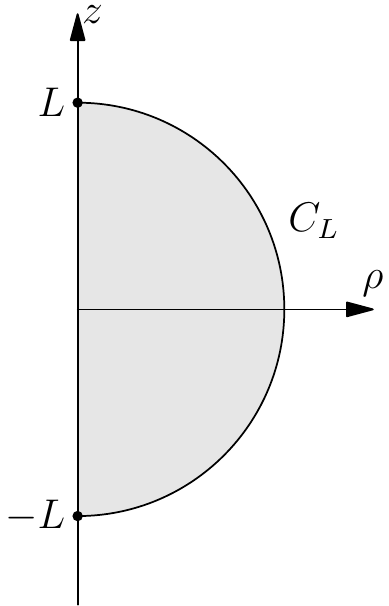}
  \caption{Domain of integration in $\Rdm$.}
  \label{fig:1}
\end{figure}
Using the divergence theorem in 2-dimensions we obtain
\begin{align}
  \label{eq:24}
  \int_{D_L} \dot \den_\sigma\, d\rho dz &= \int_{D_L} \partial_A t^A \,
  d\rho dz,\\
&= \int_{\partial D_L} t^An_A \, ds,\\
&= -\int_{-L}^L t^\rho|_{\rho=0} \, dz +\int_{C_L} t^An_A \, ds. \label{eq:24b}
\end{align}
where $n^A$ is the outwards unit normal vector and $ds$ the line element of $C_L$. 

The integrated of the first term in line \eqref{eq:24b} is given by
\begin{equation}
  \label{eq:25}
  t_\rho= 4\rho \dot \sigma_1 \partial_\rho \sigma_1 + 8 \beta_1^\rho \dot{\chi}_{1\rho \rho}+8 \beta_1^z \dot{\chi}_{1\rho z}.
\end{equation}
The first term clearly vanished at the axis $\rho=0$. The second and third term
also vanish  at the axis since the regularity  conditions  (\ref{eq:condicionesparidad2})  implies that
$\beta_1^\rho$ and $\chi_{1\rho z}$ are zero at the axis.  Hence we obtain
\begin{equation}
  \label{eq:26}
 \int_{D_L}  \dot \den_\sigma \, d\rho dz =\int_{C_L} t^An_A \, ds. 
\end{equation}
Taking the limit $L\to \infty$ and using the fall off conditions
 (\ref{eq:condicionesinfinito})  we obtain that the integral vanished and hence
$\dot m_\sigma =0$ (recall that on $C_L$ we have $ds=r d\theta$ where
$\tan\theta=z/\rho$).

The conservation of $\mo$ is similar. We take the time derivative of the mass
density $\eo$
\begin{equation}
 \label{eq:epsilonsubomegaminkowski}
 \dot  \den_\omega =4\dfrac{\dot \omega_{1}\ddot
   \omega_1}{\rho^{3}}+4\dfrac{\partial_A\omega_{1}\partial^A \dot \omega_1}{\rho^{3}}.
\end{equation}
For the first term we have
\begin{equation}
  \label{eq:28}
 4\dfrac{\dot \omega_{1}\ddot
   \omega_1}{\rho^{3}}=4\dfrac{\dot
   \omega_{1}}{\rho^{3}}\left( \partial^A\partial_A \omega_1 - \frac{3\partial_\rho \omega_1}{\rho}  \right) ,
\end{equation}
where we have used equation (\ref{eq:evo-omega-min}).

For the second term we have
\begin{equation}
  \label{eq:29}
  4\dfrac{\partial_A\omega_{1}\partial^A \dot
    \omega_1}{\rho^{3}}=4 \partial^A\left( \frac{\dot\omega_1 \partial_A
      \omega_1}{\rho^3}\right)- 4\frac{\dot \omega_1}{\rho^3} \left( \partial^A\partial_A \omega_1 - \frac{3\partial_\rho \omega_1}{\rho}  \right) .
\end{equation}
Hence we obtain
\begin{equation}
  \label{eq:30}
  \dot   \den_\omega  =\partial_A t^A,
\end{equation}
with
\begin{equation}
  \label{eq:31}
  t_A= 4 \frac{\dot\omega_1 \partial_A
      \omega_1}{\rho^3}.
\end{equation}
Integrating in the same domain as above and using the behavior at the axis
(\ref{eq:65}) and the fall off conditions  (\ref{eq:75})  at
infinity we obtain that $\dot{m}_\omega=0$.  Finally, the equality
${m}_\omega={m}_{\bar\omega}$ is deduced from (\ref{eq:49}) and the assumption
(\ref{eq:65}).

(ii) To prove the estimate   (\ref{eq:1}) note that we have the following bounds
\begin{align}
  \label{eq:10}
\ms[p,\sigma_1,\chi_1] &\geq \int_{\Rdm} |\partial \sigma_1|^2 \rho \, d\rho dz, \\
\ms[\dot p,\dot \sigma_1,\dot \chi_1] & \geq 2\int_{\Rdm} \dot p^2 \rho \, d\rho
dz =2\int_{\Rdm} \left( \Ldt \sigma_1\right)^2 \rho \, d\rho dz, \label{eq:10b}
\end{align}
where in the last equality of line \eqref{eq:10b} we have used equation
(\ref{eq:evo-sigma-mini}). The right hand side of \eqref{eq:10b} can be written
in the following form
\begin{align}
  \label{eq:59}
  \int_{\Rdm} \left( \Ldt \sigma_1\right)^2 \rho \, d\rho dz &=
  \int_{\mathbb{R}^3} (\Ldt \sigma_1)^2  \, dx^3,  \\
& = \int_{\mathbb{R}^3} | \partial^2\sigma_1|^2 \, dx^3.\label{eq:59d}
\end{align}
where in the right hand side of line \eqref{eq:59} we have changed from
cylindrical coordinates $(\rho,z)$ to Cartesian coordinates $(x,y,z)$ in $\Rt$,
with $x=\rho \cos\phi$, $y=\rho \sin \phi$. For axially symmetric functions
(i.e. functions in $\Rt$ that do not depends on $\phi$) we have that $ dx^3 =
\rho \, d\rho dz$. In Cartesian coordinates the Laplacian $\Ldt$ is given by
\begin{equation}
  \label{eq:57laplacian}
  \Ldt \sigma_1= \partial^2_x \sigma_1 + \partial^2_y \sigma_1 + \partial^2_z \sigma_1.
\end{equation}
And in line \eqref{eq:59d} we have integrated by parts, due
to the fall off assumptions on $\sigma_1$ the boundary terms vanishes. In this
equation $|\partial^2\sigma_1|^2$ denote the sum of the squares of all
second derivatives in terms of the Cartesian coordinates in $\Rt$, that is
\begin{equation}
  \label{eq:74}
  |\partial^2\sigma_1|^2= (\partial^2_x \sigma_1)^2 + (\partial^2_y \sigma_1)^2
  + (\partial^2_z \sigma_1)^2 + (\partial_x\partial_y \sigma_1)^2
  +(\partial_x\partial_z\sigma_1)^2 +(\partial_y\partial_z \sigma_1)^2.
\end{equation}

From (\ref{eq:59d}), (\ref{eq:10b}) and (\ref{eq:10}) we obtain the following
crucial estimate
\begin{equation}
  \label{eq:33}
  \ms[p,\sigma_1,\chi_1]+ \ms[\dot p,\dot \sigma_1,\dot \chi_1] \geq \int_{\mathbb{R}^3} 
 \left( | \partial^2\sigma_1|^2  +  |\partial \sigma_1|^2\right)  \, dx^3.
\end{equation}
Note that on the right hand side of (\ref{eq:33}) there are no terms with
$\sigma^2_1$ and hence we can not use directly the standard Sobolev estimate to control
pointwise the solution $\sigma_1$. However, using the  estimate given by lemma
\ref{l:estimaten} with $n=3$ and $k=2$ we  obtain the desired
result  (\ref{eq:1}).

To obtain the estimate (\ref{eq:1b}) for $\bar \omega_1$, we proceed in a similar
manner. From the definition of $m_{\bar\omega}$ we obtain
\begin{align}
  \label{eq:11}
 m_{\bar\omega}\geq \int_{\Rdm} |\partial \bar\omega_1|^2 \rho^5 \, d\rho dz=
 \int_{\mathbb{R}^7}  |\partial \bar\omega_1|^2 \, dx^7,
\end{align}
where we have used that $dx^7=\rho^5 d\rho dz$. For the higher order masses we have
\begin{align}
  \label{eq:50}
  m_{\bar\omega}[\dot{\bar{\omega}}_1] &\geq \int_{\Rdm} \ddot{\bar\omega}^2_1 \rho^5 \, d\rho dz,\\
& = \int_{\mathbb{R}^7} ( {}^{(7)}\Delta \bar\omega_1)^2  \, dx^7, \label{eq:50b}  \\
& = \int_{\mathbb{R}^7} | \partial^2\bar\omega_1|^2 \, dx^7, \label{eq:50c} 
\end{align}
where in line \eqref{eq:50b} we have used the wave equation (\ref{eq:38}) and
in line \eqref{eq:50c} we have integrated by part and used that $\bar\omega_1$
decay  at infinity. In a similar way, we obtain
that energies with $n$-time derivatives control $n+1$ spatial derivatives, in particular
\begin{align}
  \label{eq:51}
 m_{\bar\omega}[\ddot{\bar\omega}_1] &\geq \int_{\mathbb{R}^7}
 |\partial^3\bar\omega_1|^2 \, dx^7, \\
m_{\bar\omega}[\dddot{\bar\omega}_1]&\geq \int_{\mathbb{R}^7}
 |\partial^4\bar\omega_1|^2 \, dx^7. \label{eq:51b}
\end{align}
Using the bound (\ref{eq:51}), (\ref{eq:51b}) and  Lemma \ref{l:estimaten} with $n=7$ and
$k=4$ the estimate (\ref{eq:1b}) follows.

\end{proof}

We finally remark that in the proof of the conservation of $m_2$ we have used
only the evolution equations for $\sigma_1$ and $\omega_1$, the time
derivative of the momentum constraint and the gauge equation for the shift.

\section{Extreme Kerr perturbations}
\label{s:kerr}
In this section we study the linearized equation obtained in section
\ref{linealizacion} for the case of extreme Kerr background. The main
difference with respect to the previous case of Minkowski is that the
background quantities $q_0, \sigma_0, \omega_0$ are not zero. However, we still
have that (see appendix \ref{sec:kerr-black-hole})
\begin{equation}
   \label{eq:46}
   \alpha_0=\rho.
 \end{equation}
 This is the main remarkably simplification of the extreme Kerr case compared
 with the non-extreme Kerr black hole.

 For the explicit form of $(q_0, \sigma_0, \omega_0)$ see the appendix
 \ref{sec:kerr-black-hole}. These functions depend on one parameter, the mass
 $m_0$ of the black hole. This mass is given by (\ref{eq:m0}).  The only
 properties of these functions that we will are the following. They satisfy the
 stationary equations
\begin{align}
\Ldt
\sigma_{0} &=\dfrac{\arrowvert\partial\omega_{0}\arrowvert^{2}}{\eta_{0}^{2}}, 
\label{eq:sta-sigma}\\
 \partial^{A}\left(\dfrac{\rho\partial_{A}\omega_{0}}{\eta_{0}^{2}}\right) &
 =0. \label{eq:sta-omega}  
\end{align}
They satisfy  the fall off conditions (\ref{eq:condicionesinfinito}),
(\ref{eq:condicionesend}). They satisfy the following inequality in $\Rdm$
(i.e. including both the origin and infinity)  (see \cite{Dain05d})
\begin{equation}
  \label{eq:falloffso}
  \frac{|\partial \omega_0|^{2}}{\eta^2_0} \leq \frac{C}{r^{2}}, \quad |\partial \sigma_0|^2 \leq \frac{C}{r^{2}},
\end{equation}
where $C$ is a constant that depends only on $m_0$. Finally, near the axis we have
\begin{equation}
  \label{eq:omega0axis}
\frac{\partial_\rho\omega_0}{\eta_0}=O(\rho).
\end{equation}

The complete set of linearized equations, in axial symmetry,  for the extreme Kerr black hole is the
following.

\textbf{Evolution equations:}

The evolution equations for $\sigma_1$ and $\omega_1$ are given by 
\begin{align}
 \label{eq:evolucion-sigma-kerr}
 -\frac{e^{2u_{0}}}{\rho^{2}}\dot{p}+^{(3)}\Delta\sigma_{1} &=\dfrac{2}{\eta^2_0}\left(\sigma_{1}|\partial\omega_{0}|^{2}-\partial_{A}
 \omega_{1}\partial^{A}\omega_{0} \right),\\
 \label{eq:evolucion-omega-kerr}
 -\frac{e^{2u_{0}}}{\rho^{2}}\dot{\po}+^{(3)}\Delta\omega_{1} &=4\dfrac{\partial_{\rho}\omega_{1}}{\rho}+
 2\partial_{A}\omega_{1}\partial^{A}\sigma_{0}+2\partial_{A}\omega_{0}\partial^{A}\sigma_{1},
\end{align}
with
\begin{align}
 \label{eq:pkerr}
 p &=\dot{\sigma}_{1}-2\dfrac{\beta_{1}^{\rho}}{\rho}-\beta_{1}^{A}\partial_{A}\sigma_{0},\\
 \label{eq:dkerr}
 \po &=\dot{\omega}_{1}-\beta_{1}^{A}\partial_{A}\omega_{0}.
\end{align}
The evolution equation for the metric and the second fundamental are obtained
replacing (\ref{eq:46}) in equations (\ref{eq:evolucionmetrica2}) and
(\ref{eq:evolucion-chi-2}). No relevant simplification occur in these equations
compared with the general expressions (\ref{eq:evolucionmetrica2}) and
(\ref{eq:evolucion-chi-2}), and hence we do not write them again in this
section. Also, we will not make use of these equations in the proof of theorem
\ref{theorem}.

\textbf{Constraint equations:}

The momentum constraint and Hamiltonian constraint  are given by
\begin{align}
 \label{eq:momento-kerr}
 \partial^{B}\chi_{1AB} &=-\dfrac{e^{2u_{0}}}{2\rho}\left(p\left(2\dfrac{\partial_{A}\rho}{\rho}
 +\partial_{A}\sigma_{0}\right)
+\dfrac{\partial_{A}\omega_{0}}{\eta_{0}^{2}}\po\right),\\
 \label{eq:hamiltonianokerr}
 ^{(3)}\Delta\sigma_{1}+\Delta q_{1} &=- \frac{\den_1}{4\rho },
\end{align}
where $\den_1$ is given by 
\begin{equation}
 \label{eq:epsilon2kerr}
 \den_1=\left( 2\partial_{A}\sigma_{0}\partial^{A}\sigma_{1} +\frac{2\partial_{A}\omega_{0}\partial^{A}\omega_{1}}{\eta_{0}^{2}}-\frac{2\sigma_{1}|\partial\omega_{0}|^2}{\eta_{0}^{2}}\right)\rho.
 \end{equation}

\textbf{Gauge equations:}

For the shift we have 
\begin{equation}
 \label{eq:shiftkerr}
 \left(\ck\beta_{1}\right)^{AB}=2e^{-2u_{0}}\rho \chi_{1}^{AB}.
\end{equation}

The energy density $\den_2$ defined previously in equation (\ref{eq:45}) is given by
\begin{multline}
 \label{eq:epsilonkerr}
 \den_{2}=\left(2\dfrac{e^{2u_{0}}}{\rho^{2}}p^{2}+ 2\dfrac{e^{2u_{0}}}{\rho^{2}\eta_{0}^{2}}\po^{2} 
 +4e^{-2u_{0}} \chi^{AB}_{1}\chi_{1AB} +\right. \\
\left. +2|\partial\sigma_{1}|^{2} +2\dfrac{\arrowvert\partial\omega_{1}\arrowvert^{2}}
 {\eta_{0}^{2}}
 +4\dfrac{\arrowvert\partial\omega_{0}\arrowvert^{2}}{\eta_{0}^{2}}\sigma_{1}^{2}
 -8\dfrac{\partial_{A}\omega_{0}\partial^{A}\omega_{1}\sigma_{1}}{\eta_{0}^{2}}\right)\rho
\end{multline}

Note that the energy density (\ref{eq:epsilonkerr}) is not
positive definite and hence it is by no means obvious that the energy $m_2$ is
positive. 

\begin{theorem}
  \label{theorem}
  Consider a smooth solution of the linearized equations presented above, such
  that it satisfies the fall off decay conditions at infinity
  \eqref{eq:condicionesinfinito}, the decay conditions at the extra asymptotic
  end at the origin \eqref{eq:condicionesend} and the regularity conditions
  \eqref{eq:condicionesparidad1}, \eqref{eq:condicionesparidad2} at the
  axis. Assume also that $\omega_1$ satisfies the following conditions. At the axis we have
  \begin{equation}
    \label{eq:omegaeje}
     \dot{\bar\omega}_1, \bar \omega_1=O_1(1),
  \end{equation}
and both at infinity and at the origin we impose
\begin{equation}
  \label{eq:omega1infinity}
  \dot{\bar\omega}_1, \bar \omega_1  =o_1(r^{-5/2}),
\end{equation}
where we have defined
\begin{equation}
  \label{eq:64}
  \bar \omega_1 =\frac{\omega_1}{\eta_0^{2}}.
\end{equation}

Then, we have:
\begin{itemize}
\item[(i)] The first order mass $m_1$ defined by (\ref{eq:m1}) with $\den_1$
  given by (\ref{eq:epsilon2kerr}) vanishes $m_1=0$.  The second order mass
  $m_2$ defined by (\ref{eq:m2}) with $\den_2$ given by (\ref{eq:epsilonkerr})
  is equal to the following expression, which is explicitly definite positive
  \begin{equation}
    \label{eq:60}
    m_2=\frac{1}{16}\int_{\Rdm} \bar\epsilon_2 \, \rho d\rho dz,
  \end{equation}
where  
\begin{multline}
    \label{eq:2} 
\bar\epsilon_2=\left(2\dfrac{e^{2u_{0}}}{\rho^{2}}p^{2}+ 2\dfrac{e^{2u_{0}}}{\rho^{2}\eta_{0}^{2}}\po^{2} 
 +4e^{-2u_{0}} \chi^{AB}_{1}\chi_{1AB} +\right. \\
+\left(\partial\sigma_{1}+\omega_{1}\eta^{-2}_{0}\partial\omega_{0}\right)^{2} +\left(\partial\left(\omega_{1}\eta_{0}^{-1}\right)-\eta_{0}^{-1}\sigma_{1}\partial\omega_{0}\right)^{2}+\\
\left. +\left(\eta_{0}^{-1}\sigma_{1}\partial\omega_{0}-\omega_{1}\eta_{0}^{-2}\partial\eta_{0}\right)^{2}\right)\rho. 
  \end{multline}

\item[(ii)] The mass $m_2$  is conserved along the evolution. 

\end{itemize}
\end{theorem}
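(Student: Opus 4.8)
The plan is to establish (i) and (ii) by manipulating the linearized equations directly, following the same divergence‑identity strategy used for Minkowski in Theorem~\ref{t:minkowski} but now retaining the non‑trivial background $(\sigma_0,\omega_0,q_0)$ and exploiting the stationary identities \eqref{eq:sta-sigma}--\eqref{eq:sta-omega}. The two recurring ingredients are the pointwise identity $\rho\,\Ldt f=\partial_A(\rho\,\partial^A f)$, which turns background‑built terms into total divergences once \eqref{eq:sta-sigma}--\eqref{eq:sta-omega} are invoked, and the control of the resulting boundary integrals supplied by \eqref{eq:condicionesinfinito}, \eqref{eq:condicionesend}, the parity conditions \eqref{eq:condicionesparidad1}--\eqref{eq:condicionesparidad2}, the axis estimate \eqref{eq:omega0axis}, the decay \eqref{eq:falloffso}, and the hypotheses \eqref{eq:omegaeje}--\eqref{eq:omega1infinity} on $\bar\omega_1$ at the axis, at infinity and at the origin.

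To show $m_1=0$ I would start from $16m_1=\int_{\Rdm}\den_1\,d\rho dz$ with $\den_1$ given by \eqref{eq:epsilon2kerr} and integrate by parts, moving the gradient off $\sigma_1$ and off $\omega_1$. The term $\int\eta_0^{-2}\partial_A\omega_0\partial^A\omega_1\,\rho$ reduces to a pure boundary integral by the stationary $\omega_0$‑equation \eqref{eq:sta-omega}, while the bulk term that \eqref{eq:sta-sigma} produces from $\int\partial_A\sigma_0\partial^A\sigma_1\,\rho$ cancels against the remaining $-\sigma_1|\partial\omega_0|^2\eta_0^{-2}\rho$. What is left is a divergence integrated over $\Rdm$, and its boundary contributions vanish: on the axis by \eqref{eq:condicionesparidad1}--\eqref{eq:condicionesparidad2} together with \eqref{eq:omega0axis}, and at $r\to\infty$ and $r\to0$ by \eqref{eq:condicionesinfinito}, \eqref{eq:condicionesend} and \eqref{eq:omega1infinity}. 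This is the explicit linearized version of the vanishing of the first variation of the ADM mass on a stationary solution.

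The substance of (i) is the identity $m_2=\tfrac{1}{16}\int_{\Rdm}\bar\epsilon_2$. Since the $p^2$, $\po^2$ and $\chi_1^{AB}\chi_{1AB}$ terms of $\den_2$ in \eqref{eq:epsilonkerr} already coincide with those of $\bar\epsilon_2$, it is enough to show that the ``potential'' part
\[
2|\partial\sigma_1|^2+2\frac{|\partial\omega_1|^2}{\eta_0^2}+4\frac{|\partial\omega_0|^2}{\eta_0^2}\sigma_1^2-8\frac{\sigma_1\partial_A\omega_0\partial^A\omega_1}{\eta_0^2}
\]
of $\den_2$ differs from the sum of the last three squares of $\bar\epsilon_2$ by a spatial divergence plus terms proportional to \eqref{eq:sta-sigma}--\eqref{eq:sta-omega}. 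I would first trade $\eta_0^{-2}|\partial\omega_1|^2$ for $|\partial(\omega_1\eta_0^{-1})|^2$ modulo a divergence and a multiple of $\omega_1^2(\cdots)$ absorbed through \eqref{eq:sta-omega} (and the Ernst‑type identity for $\Ldt\eta_0$ it encodes), then complete the squares dictated by the combinations $\partial\sigma_1+\omega_1\eta_0^{-2}\partial\omega_0$, $\partial(\omega_1\eta_0^{-1})-\eta_0^{-1}\sigma_1\partial\omega_0$ and $\eta_0^{-1}\sigma_1\partial\omega_0-\omega_1\eta_0^{-2}\partial\eta_0$, using \eqref{eq:sta-sigma} to absorb the surviving $\sigma_1^2|\partial\omega_0|^2$ term. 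This is the linearized counterpart of the convexity of the mass functional exploited in \cite{Dain05d}\cite{Dain06c}, and I expect it to be the main obstacle: getting every cross term and background‑equation contribution to match is delicate, and one must still check that the divergence integrates to zero --- the sensitive points being the behaviour of the current at the horizon (the origin, where the background is asymptotically cylindrical) and on the axis, where $\omega_0$ and $\eta_0$ degenerate as in \eqref{eq:omega0axis}. Once the identity is in hand, $m_2\ge0$ is immediate because $\bar\epsilon_2$ is a sum of squares with positive coefficients.

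For (ii) I would mimic \eqref{eq:m-sub-sigma-minkowski}--\eqref{eq:26}. Differentiating $\den_2$ in $t$ (the background being $t$‑independent), the terms containing $\dot p$ and $\dot\po$ are rewritten via the evolution equations \eqref{eq:evolucion-sigma-kerr}--\eqref{eq:evolucion-omega-kerr}, while the term carrying $\chi_1^{AB}\dot\chi_{1AB}$ is treated exactly as in the Minkowski proof: the shift equation \eqref{eq:shiftkerr} expresses $\chi_1^{AB}$ through $(\ck\beta_1)^{AB}$, an integration by parts produces $\beta_1^{B}\partial^A\dot\chi_{1AB}$, and the $t$‑derivative of the momentum constraint \eqref{eq:momento-kerr} disposes of it; no use of the metric or extrinsic‑curvature evolution equations is needed. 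The result should be $\dot\den_2=\partial_A t^A$ for an explicit current $t^A$; unlike the Minkowski case the $\sigma_1$ and $\omega_1$ sectors are coupled, so the cross terms carrying $\partial\sigma_0$ and $\partial\omega_0$ must cancel among themselves, which lengthens but does not alter the computation. Integrating $\partial_A t^A$ over a half‑disk --- with a small half‑disk around the origin removed in the black‑hole case --- the component $t^\rho$ vanishes on the axis by \eqref{eq:condicionesparidad2}, and the arcs at $r\to\infty$ and $r\to0$ contribute nothing by \eqref{eq:condicionesinfinito}, \eqref{eq:condicionesend}, \eqref{eq:omegaeje} and \eqref{eq:omega1infinity}, giving $\dot m_2=0$.
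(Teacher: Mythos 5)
Your proposal is correct and follows essentially the same route as the paper for all three parts: the integration by parts against the stationary equations \eqref{eq:sta-sigma}--\eqref{eq:sta-omega} for $m_1=0$, the rewriting of the potential part of $\den_2$ as the sum of squares in $\bar\epsilon_2$ plus a divergence, and the Minkowski-style divergence identity (using \eqref{eq:evolucion-sigma-kerr}--\eqref{eq:evolucion-omega-kerr}, \eqref{eq:shiftkerr} and the time derivative of \eqref{eq:momento-kerr}) for conservation. The one step you flag as delicate and leave unverified --- the completion of squares for the positivity --- is exactly where the paper does not recompute either, but instead invokes the Carter identity as established in \cite{Dain05d}, together with the explicit boundary current \eqref{eq:63}.
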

Note that the boundary condition (\ref{eq:omegaeje}) at the axis (outside the
origin) is identical to the one used in Minkowski in section \ref{minkowki},
since $\eta_0$ behaves like $\rho^2$ at the axis. 

\begin{figure}
  \centering
\includegraphics{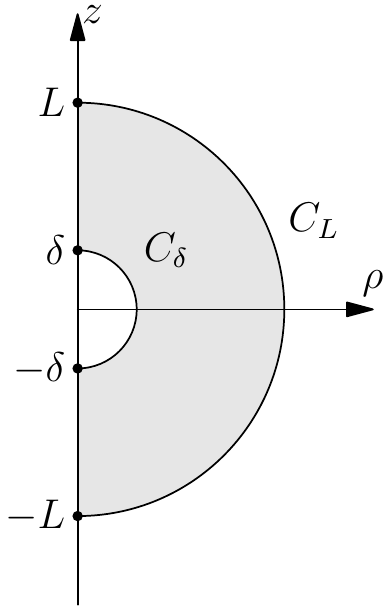} 
  \caption{Domain of integration in $\Rdm$ for the extreme Kerr black hole.}
  \label{fig:2}
\end{figure}

\begin{proof}
(i) We first prove that $m_1=0$. Take the density $\den_1$ given by~(\ref{eq:epsilon2kerr}), for the first term we have
\begin{align}
  \label{eq:3}
  2\rho \partial_A \sigma_0 \partial^A \sigma_1 &= 2 \partial^A\left(\rho \sigma_1 \partial_A \sigma_0  \right)- 2 \sigma_1 \partial^A \left(\rho \partial_A \sigma_0  \right),\\
&=  2 \partial^A\left(\rho \sigma_1 \partial_A \sigma_0  \right)-2\rho \sigma_1  \Ldt \sigma_0,\label{eq:3b}\\
&= 2 \partial^A\left(\rho \sigma_1 \partial_A \sigma_0  \right)-2 \rho \sigma_1 \frac{|\partial \omega_0|^2}{\eta_0^2},\label{eq:3c}
\end{align}
where in line (\ref{eq:3b}) we have used the definition of $\Ldt$ given by equation (\ref{eq:Ldt}) and in line (\ref{eq:3c}) we have used the stationary equation (\ref{eq:sta-sigma}). 

For the second term we have 
\begin{align}
\label{eq:om1}
\frac{2\rho\partial_{A}\omega_{0}\partial^{A}\omega_{1}}{\eta_{0}^{2}} &= 2\partial^A \left(\frac{\rho\omega_{1}\partial_{A}\omega_{0} }{\eta_{0}^{2}}  \right)  - 2 \omega_1 \partial^A \left(\frac{\rho \partial_A \omega_0}{\eta_0^2} \right),\\
&=  2\partial^A \left(\frac{\rho\omega_{1}\partial_{A}\omega_{0} }{\eta_{0}^{2}}  \right), \label{eq:om1b}
\end{align}
where in line (\ref{eq:om1b}) we have used the stationary equation (\ref{eq:sta-omega}). 
Summing up these terms we find
\begin{equation}
  \label{eq:27}
  \den_1= \partial_A t^A, 
\end{equation}
where
\begin{equation}
  \label{eq:47}
  t_A=2\rho \sigma_1 \partial_A \sigma_0+ 2\rho \omega_1  \frac{\partial_A \omega_0}{\eta_0^2}.
\end{equation}
We integrate equation (\ref{eq:27}) in the domain showed in figure \ref{fig:2}
for some finite $\delta$ and $L$ with $0<\delta< L$. At the axis the axis the
first term in (\ref{eq:47}) clearly vanished. The second term also vanishes by
the assumption (\ref{eq:omegaeje}) and the behavior (\ref{eq:omega0axis}) of
the background quantities. Hence, the integral of (\ref{eq:27}) contains only
the two boundary terms $C_\delta$ and $C_L$. Then, we take the limit $\delta\to
0$ and $L\to \infty$. Using the assumptions (\ref{eq:omega1infinity}) on
$\omega_1$, the assumptions (\ref{eq:condicionesinfinito}) and
(\ref{eq:condicionesend}) on $\sigma_1$ and the background fall off
(\ref{eq:falloffso}) we obtain that these two boundary integrals vanish. Hence,
it follows that $m_1=0$.

We prove now the positivity of $m_2$. The proof is identical to the proof of
positivity presented in section 3 of \cite{Dain05d}, which is based on the
Carter identity \cite{Carter71}. The last four terms in (\ref{eq:epsilonkerr})
are identical to the integrand of equation (24) in \cite{Dain05d} (in that
reference a different notation is used, namely $\sigma_{1}=\alpha$,
$\omega_{1}=y$, $\eta_{0}=X$ and $\omega_{0}=Y$). Then, Carter identity given
by equation (57) in \cite{Dain05d} in the notation of this article can be
written as
\begin{equation}
  \label{eq:62}
  \bar \epsilon_2 -\epsilon_2= \partial_A t^A,
\end{equation}
where
\begin{equation}
  \label{eq:63}
  t_A=2\rho \left(2\sigma_1 \partial_A \sigma_1 +\omega_1 \frac{\partial_A\omega_1}{\eta_0^2}
-2\sigma_1 \omega_1\frac{\partial_A \omega_0}{\eta^2_0}+\frac{\omega_1}{\eta_0} \partial_A \left( \frac{\omega_1}{\eta_0}\right)  \right),
\end{equation}
and $\bar \epsilon_2$ is given by (\ref{eq:2}).  Recall that the divergence
term in the right hand side of equation (\ref{eq:62}) has two contributions,
one is the right hand side of equation (57) in \cite{Dain05d} and the other
comes from the integration by parts in equation (63) in \cite{Dain05d}. Also
note that in \cite{Dain05d} Cartesian coordinates in $\mathbb{R}^3$ are used
for the integration, and here we use cylindrical coordinates, and hence the
factor $\rho$ appears in (\ref{eq:63}).  Integrating equation (\ref{eq:62}) and
using the fall off conditions at infinity and at the axis it follows that $m_2$
is given by (\ref{eq:60}) and hence it is positive.

(ii) To prove the conservation of $m_2$ we take a time derivative of the mass
density (\ref{eq:epsilonkerr}), we obtain
\begin{multline}
\label{eq:mkerr}
 \dot \varepsilon_2=4\dfrac{e^{2u_{0}}}{\rho}p\dot{p}+4\dfrac{e^{2u_0}}{\rho\eta_0^2} \po\dot\po + 8e^{-2u_{0}}\rho
\chi_{1}^{AB}\dot{\chi}_{1AB} +\\
+4\rho \partial_{A}
\sigma_{1}\partial^{A}\dot{\sigma}_{1} +
  4\rho
  \dfrac{\partial_{A}\omega_{1}\partial^{A}\dot{\omega}_{1}}{\eta_{0}^{2}}-8\rho
  \sigma_{1}
  \dfrac{\partial_{A}\omega_{0}}{\eta_{0}^{2}}\partial^{A}\dot{\omega}_{1}
+8\rho\dfrac{\arrowvert\partial\omega_{0}\arrowvert^{2}\sigma_{1}\dot{\sigma}_{1}}{\eta_{0}^{2}}
  -8\rho \dot{\sigma}_{1}\dfrac{\partial_{A}\omega_{0}}{\eta_{0}^{2}}\partial^{A}\omega_{1}.
\end{multline}
The strategy is very similar (but the calculations are more lengthy) than in
the Minkowski case discussed in section \ref{minkowki}: using the linearized
equations we will write the right hand side of (\ref{eq:mkerr}) as a total
divergence. We proceed analyzing term by term.  

For the first two terms we just use the definition of
$p$ and $d$ given in equations (\ref{eq:pkerr}) and (\ref{eq:dkerr}) respectively. We obtain
\begin{align}
  \label{eq:17}
  4\dfrac{e^{2u_{0}}}{\rho}p\dot{p} &=4\dfrac{e^{2u_{0}}}{\rho} \dot{p}
  \left(\dot \sigma_1 -\frac{2\beta^\rho_1}{\rho} -\beta_1^A \partial_A \sigma_0 \right),\\
\label{eq:18}
  4\dfrac{e^{2u_0}}{\rho\eta_0^2} \po\dot\po &=
  4\dfrac{e^{2u_0}}{\rho\eta_0^2} \dot\po \left( \dot\omega_1
    -\beta_1^A\partial_A \omega_0 \right).
\end{align}

For the third term we have
\begin{align}
  \label{eq:16}
  8e^{-2u_{0}}\rho \chi_{1}^{AB}\dot{\chi}_{1AB} &=
  8 \dot \chi_{1}^{AB} \partial_A \beta_{1B}, \\
 &= 8\partial_A(\beta_{1B}  \dot \chi_{1}^{AB})-8 \beta_{1B} \partial_A \dot
 \chi_{1}^{AB},\\
&= 8\partial_A(\beta_{1B}  \dot \chi_{1}^{AB})+4
\dfrac{e^{2u_{0}}}{\rho}\left(\dot{p} \left(2\dfrac{\beta_{1}^{\rho}}{\rho} 
 +\beta_{1}^{A}\partial_{A}\sigma_{0} \right)+\dot\po\dfrac{\beta_{1}^{A}\partial_{A}\omega_{0}}{\eta_{0}^{2}}\right),\label{eq:16c}
\end{align}
where in the  line \eqref{eq:16} we have used equation (\ref{eq:shiftkerr}) and the
fact that $\chi_1^{AB}$ is trace free and  in the line \eqref{eq:16c} we have used the
time derivative of equation (\ref{eq:momento-kerr}).

For the fourth term we have
\begin{align}
  \label{eq:19}
  4\rho \partial_{A} \sigma_{1}\partial^{A}\dot{\sigma}_{1} &= 4 \partial_{A} (
  \rho \dot \sigma_{1}\partial^{A} \sigma_{1})-4 \dot
  \sigma_1 \partial^A(\rho \partial_A \sigma_1), \\ 
& = 4 \partial_{A} (
  \rho \dot \sigma_{1}\partial^{A} \sigma_{1})-4 \dot
  \sigma_1 \rho \Ldt \sigma_1, \label{eq:19b}\\
&= 4 \partial_{A} (
  \rho \dot \sigma_{1}\partial^{A} \sigma_{1})-4\frac{\dot \sigma_1 \dot p
    e^{2u_0}}{\rho}  + 8\frac{\dot \sigma_1}{\eta^2_0} \rho \partial_A
  \omega_1 \partial^A \omega_0 - 8\frac{\dot\sigma_1 \sigma_1 \rho |\partial \omega_0|^2}{\eta^2_0},\label{eq:19c}
\end{align}
where in line \eqref{eq:19b} we used the definition of the operator $\Ldt$
given by equation (\ref{eq:Ldt}), in line \eqref{eq:19c} we used equation  \eqref{eq:evolucion-sigma-kerr}.

For the fifth term we obtain
\begin{align}
  \label{eq:20}
  4\rho \frac{\partial_A \dot \omega_1 \partial^A
    \omega_1}{\eta^2_0} &=4 \partial^A \left( \frac{\rho\dot \omega_1 \partial_A
      \omega_1}{\eta_0^2} \right)-4 \dot \omega_1 \partial^A \left( \frac{\rho\partial_A
      \omega_1}{\eta_0^2} \right),\\
& = 4 \partial^A \left( \frac{\rho\dot \omega_1 \partial_A
      \omega_1}{\eta_0^2} \right)-4\frac{\rho \dot \omega_1}{\eta_0^2}\left(
    \Ldt \omega_1-\frac{4\partial_\rho\omega_1}{\rho}-2\partial_A
    \omega_1\partial^A\sigma_0 \right), \label{eq:20b} \\
&= 4 \partial^A \left( \frac{\rho\dot \omega_1 \partial_A
      \omega_1}{\eta_0^2} \right) - 4\frac{e^{2u_0}}{\rho \eta^2_0}\dot \po
  \dot\omega_1-8\frac{\rho \dot \omega_1 \partial_A \omega_0 \partial^A
    \sigma_1}{\eta_0^2},\label{eq:20c} 
\end{align}
where in line \eqref{eq:20b} we have used the definition of the operator $\Ldt$ given in equation  (\ref{eq:Ldt}) 
and the definition of $\eta_0$ given in equation (\ref{eq:eta0sigma0}).  In the line \eqref{eq:20c} we have used the
evolution equation (\ref{eq:evolucion-omega-kerr}).

For the sixth term we obtain
\begin{align}
  \label{eq:21}
  -8\rho \sigma_{1}
  \dfrac{\partial_{A}\omega_{0}\partial^A\dot \omega_1}{\eta_{0}^{2}} &= -8 \partial^A
  \left( \rho \dot{\omega}_{1} \sigma_{1}
  \dfrac{\partial_{A}\omega_{0}}{\eta_{0}^{2}} \right) 
+8 \dot\omega_1 \partial^A \left( \rho \sigma_{1}
  \dfrac{\partial_{A}\omega_{0}}{\eta_{0}^{2}} \right),\\
&=-8 \partial^A
  \left( \dot{\omega}_{1} \sigma_{1}
  \dfrac{\partial_{A}\omega_{0}}{\eta_{0}^{2}} \right) +
8 \rho \dot \omega_{1}
\dfrac{\partial_{A}\omega_{0}\partial^{A} \sigma_{1}}{\eta_{0}^{2}}+8\dot\omega_1
\sigma_1 \partial^A \left(  \frac{\rho \partial_A \omega_0 }{\eta^2_0}\right),\\
&= -8 \partial^A
  \left( \dot{\omega}_{1} \sigma_{1}
  \dfrac{\partial_{A}\omega_{0}}{\eta_{0}^{2}} \right) +
8 \rho \dot \omega_{1}
\dfrac{\partial_{A}\omega_{0}\partial^{A} \sigma_{1}}{\eta_{0}^{2}}, \label{eq:21b}
\end{align}
where in line \eqref{eq:21b} we have used the stationary equation
(\ref{eq:sta-omega}). 

We sum the six terms obtained above plus the two last terms in
(\ref{eq:mkerr}),  only the divergence terms survive,  we obtain
\begin{equation}
  \label{eq:22}
  \dot \varepsilon_2= \partial_A t^A,
\end{equation}
where
\begin{equation}
  \label{eq:23}
  t^A= 4 
  \rho \dot \sigma_{1}\partial^{A} \sigma_{1}+ 8 \beta_{1B}  \dot \chi_{1}^{AB}+4  \frac{\rho\dot \omega_1 \partial^A
      \omega_1}{\eta_0^2}
-8  \dot{\omega}_{1} \sigma_{1}
  \dfrac{\partial^{A}\omega_{0}}{\eta_{0}^{2}}.
\end{equation}
Remarkably we get only one extra term compared with the Minkowski case (compare (\ref{eq:23}) with the sum of (\ref{eq:15}) and (\ref{eq:31})).

We integrate equation (\ref{eq:22}) in the domain showed in figure
\ref{fig:2}. The boundary term at the axis vanished by the hypothesis
(\ref{eq:condicionesparidad2}). Then, we take the limit $\delta\to 0$ and $L
\to \infty$, and the other two boundary integrals also vanished by the fall off
conditions (\ref{eq:condicionesinfinito}), (\ref{eq:condicionesend}) and (\ref{eq:omegaeje}), (\ref{eq:omega1infinity}).
  
\end{proof}

\section*{Acknowledgments}
 This work was supported in by grant PICT-2010-1387 of CONICET (Argentina) and
  grant Secyt-UNC (Argentina).

\appendix

\section{Kerr black hole in the maximal-isothermal gauge}
\label{sec:kerr-black-hole}
In this appendix we explicitly write the Kerr black hole metric in the maximal
-- isothermal gauge described in section \ref{sec:axially-symm-pert}. In
particular, we show that in this gauge the metric satisfies the conditions
(\ref{eq:beta0chi0}).

The Kerr metric, with parameters $(m,a)$, in Boyer-Lindquist coordinates
$(t,\tilde r,\theta,\phi)$ is given by
\begin{equation}
\label{kerr}
 g=- V dt^2+2 W dtd\phi
 +\frac{\Sigma}{\Dk} d \tilde r^2+\Sigma 
d \theta ^2+\eta d\phi^2,
\end{equation}
where
\begin{equation}
  \label{eq:app66}
\Dk =\tilde r^2+a^2-2m\tilde r,\quad
\Sigma=\tilde r^2+a^2 \cos^2 \theta,
\end{equation}
and
\begin{align}
  \label{eq:app67}
V &=\frac{\Dk-a^2\sin^2\theta}{\Sigma},\\  
W &=-\frac{2ma\tilde r\sin^2\theta}{\Sigma}, \label{eq:app68} \\
\eta & =\left(\frac{(\tilde r^2+a^2)^2 -\Dk a^2 \sin^2\theta}{\Sigma}
\right)\sin^2\theta.  \label{eq:app68b}
\end{align}
The angular momentum is given by
\begin{equation}
  \label{eq:57}
  J=ma.
\end{equation}

The metric (\ref{kerr}) is stationary and axially symmetric because it has the following two  Killing vectors
\begin{equation}
  \label{eq:app69}
 \xi^\mu=\left(\frac{\partial}{\partial t} \right)^\mu, \quad
\eta^\nu=\left(\frac{\partial}{\partial \phi} \right)^\nu,
\end{equation}
where $\xi^\mu$ is timelike near infinity (i.e. outside the ergosphere) and
$\eta^\mu$ is spacelike and it vanished at the axis.  The scalars
(\ref{eq:app67}), (\ref{eq:app68}) and (\ref{eq:app68b}) are written in terms
of the Killing vectors as follows
\begin{equation}
  \label{eq:app70}
 V= - \xi^\mu \xi^\nu g_{\mu\nu}, \quad \eta = \eta^\mu \eta^\nu g_{\mu\nu} , \quad W=\eta^\mu\xi^\nu g_{\mu\nu}. 
\end{equation}
In particular, $\eta$ is the square norm of the axial Killing vector
$\eta^\mu$. In these equations we are using 4-dimensional indices $\mu,\nu
\cdots$.

The twist potential $\omega$ of the axial Killing vector $\eta^\mu$  is given by
\begin{equation}
  \label{eq:18omega}
\omega = 2ma(\cos^3\theta-3\cos\theta)- \frac{2ma^3\cos\theta\sin^4\theta}{\Sigma}. 
\end{equation}

The 3-dimensional Lorenzian metric $h$ on the quotient manifold (see equation
(26) on \cite{Dain:2009wi}, we follow the notation of that article) is defined by
\begin{equation}
  \label{eq:app71}
  \eta g_{\mu\nu}= h_{\mu\nu}+\eta_\mu \eta_\nu. 
\end{equation}
Using the explicit form of the Kerr metric (\ref{kerr}) and the Killing vector
$\eta^\mu$ we obtain that $h$ is given by
\begin{equation}
  \label{eq:app29}
h= -(V \eta + W^2) dt^2+ \frac{\eta \Sigma}{\Dk} d \tilde
r^2+\eta \Sigma 
d \theta ^2. 
\end{equation}
For the Kerr metric, the following remarkably relation holds 
\begin{equation} 
\label{Kerrrelation}
V \eta +W^2=\Dk \sin^2 \theta.
\end{equation}
Using (\ref{Kerrrelation}) we further simplify the expression for the metric
$h$
\begin{equation}
  \label{eq:app31}
  h = -\Dk \sin^2 \theta dt^2+ \frac{\eta \Sigma}{\Dk} d \tilde
  r^2+\eta \Sigma 
  d \theta ^2.
\end{equation}
This metric is static. The foliation $t=constant$ has zero extrinsic curvature
and hence it is a maximal foliation. The shift of this foliation also vanished,
then the condition (\ref{eq:beta0chi0}) is satisfied.  However, the coordinates
$(\tilde r, \theta)$ are not isothermal because they do not satisfy the
condition (\ref{eq:2 metrica}).

To introduce isothermal coordinates we will assume that  $m\geq |a|$ (i.e. the
Kerr metric (\ref{kerr}) describe a black hole).   
Let  $r$ be defined as the positive root of the equation 
\begin{equation}
\tilde r =r +m+\frac{m^2-a^2}{4r},
\end{equation}
that is 
\begin{equation}
r = \frac{1}{2} \left(\tilde r -m + \sqrt{\Dk}  \right).
\end{equation}
We have
\begin{equation}
  \label{eq:appdr}
  d\tilde r = \frac{\sqrt{\Dk}}{r}dr .
\end{equation}
We define the  cylindrical coordinates $(\rho,z)$ in terms of the spherical
coordinates $(r,\theta)$ by the standard formula 
\begin{equation}
  \label{eq:67}
  \rho=r\sin\theta, \quad z=r\cos\theta. 
\end{equation}
Then the metric $h$ in the new
coordinate system $(t, \rho,z)$ is given by  
\begin{equation}
  \label{eq:app61}
  h= -\alpha^2 dt^2+e^{2u}(d\rho^2+ d z^2),
\end{equation}
where
\begin{equation}
  \label{eq:app62}
  \alpha=\sqrt{\Dk} \sin \theta=\rho\left(1-\frac{(m^2-a^2)}{4r^2} \right ),
\end{equation}
and  
\begin{equation}
  \label{eq:app33}
e^{2u}= \frac{\eta \Sigma}{r^2}. 
\end{equation}
The intrinsic metric of the $t=constant$  of the slices is
\begin{equation}
  \label{eq:69}
  q=e^{2u} \left(  d \rho^2+ d z^2\right). 
\end{equation}
That is, the coordinates system is isothermal.

The function $\sigma$ is defined in terms of the norm $\eta$ by 
\begin{equation}
  \label{eq:app35}
e^{\sigma}=\frac{\eta}{\rho^2},
\end{equation}

The function  $q$ is given  by
\begin{equation}
  \label{eq:app34}
e^{2q}= \frac{\sin^2\theta \Sigma}{\eta},
\end{equation}
We have the relation
\begin{equation}
  \label{eq:52}
 u=q+\sigma+\log \rho .
\end{equation}

Note that the lapse satisfies the  maximal gauge condition
\begin{equation}
  \label{eq:app63}
  \Ld \alpha=0.
\end{equation}
In the extreme case $m=|a|$ and hence we have
\begin{equation}
  \label{eq:66}
  \alpha=\rho. 
\end{equation}

\section{A Sobolev like estimate}
\label{sec:sobol-estim}

In this appendix we prove the following Sobolev type estimate.  

\begin{lemma}
\label{l:estimaten}
There exists a constant $C>0$ such that for all $u\in
C^\infty_0(\mathbb{R}^n)$, with $n\geq 3$, the following inequality holds
\begin{align}
  \label{eq:53}
 C\left(  \int_{\mathbb{R}^n} \left( |\partial^k u|^2  + |\partial^{k-1}
     u|^2\right) \, dx^n
 \right)^{1/2} \geq \sup_{x\in \mathbb{R}^n} |u(x)|,  
\end{align}
where $k>n/2$. 
\end{lemma}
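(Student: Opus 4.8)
The plan is to pass to the Fourier transform and localise in frequency. Write $\hat u(\xi)=\int_{\mathbb{R}^n}u(x)\,e^{-ix\cdot\xi}\,dx$. Since $u\in C^\infty_0(\mathbb{R}^n)$ is Schwartz, the inversion formula gives, for every $x$,
\[
|u(x)|=(2\pi)^{-n}\left|\int_{\mathbb{R}^n}\hat u(\xi)\,e^{ix\cdot\xi}\,d\xi\right|\le (2\pi)^{-n}\int_{\mathbb{R}^n}|\hat u(\xi)|\,d\xi ,
\]
so it suffices to bound $\|\hat u\|_{L^1(\mathbb{R}^n)}$ by the right hand side of \eqref{eq:53}. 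I would split this integral as $\int_{|\xi|\le 1}+\int_{|\xi|>1}$ and estimate each piece by Cauchy--Schwarz against an appropriate power weight, using Plancherel to convert the weighted $L^2$ norm of $\hat u$ back into derivatives of $u$.

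For the high--frequency part I would use the weight $|\xi|^{2k}$:
\[
\int_{|\xi|>1}|\hat u|\,d\xi\le\left(\int_{\mathbb{R}^n}|\xi|^{2k}\,|\hat u(\xi)|^2\,d\xi\right)^{1/2}\left(\int_{|\xi|>1}|\xi|^{-2k}\,d\xi\right)^{1/2}.
\]
The last integral is finite precisely because $k>n/2$ (its radial integrand is $r^{\,n-1-2k}$, integrable at infinity iff $n-2k<0$). Since $|\xi|^{2k}=(\sum_i\xi_i^2)^k\le C_{n,k}\sum_{|\alpha|=k}\xi^{2\alpha}$, Plancherel bounds the first factor by a constant times $(\int_{\mathbb{R}^n}|\partial^k u|^2\,dx^n)^{1/2}$, with $|\partial^k u|^2$ the sum of squares of all $k$-th order partials as in \eqref{eq:74}. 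For the low--frequency part I would instead use the weight $|\xi|^{2(k-1)}$, obtaining in the same way the bound $C\,(\int_{|\xi|\le 1}|\xi|^{-2(k-1)}\,d\xi)^{1/2}\,(\int_{\mathbb{R}^n}|\partial^{k-1}u|^2\,dx^n)^{1/2}$. Adding the two pieces and using $a+b\le\sqrt2\,(a^2+b^2)^{1/2}$ gives \eqref{eq:53} after relabelling the constant.

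The one genuinely delicate point — and the main obstacle — is the convergence of the low--frequency weight integral $\int_{|\xi|\le 1}|\xi|^{-2(k-1)}\,d\xi$, whose radial integrand near the origin is $r^{\,n-1-2(k-1)}$: this is integrable iff $2(k-1)<n$, i.e. $k<n/2+1$. Thus the statement should really be read with the hypothesis $n/2<k<n/2+1$; in fact for $k>n/2+1$ the inequality is false, as one sees by testing $u_R(x)=\phi(x/R)$ with $\phi\in C^\infty_0$ fixed and letting $R\to\infty$, since then the left hand side tends to $0$ while $\sup|u_R|$ stays constant. This is harmless here because the lemma is applied only with $(n,k)=(3,2)$ and $(n,k)=(7,4)$, and both satisfy $n/2<k<n/2+1$. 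Everything else in the argument is a routine application of Cauchy--Schwarz, Plancherel, and polar coordinates.
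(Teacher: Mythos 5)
Your proof is correct, but it takes a genuinely different route from the paper. The paper stays entirely on the physical side: it iterates the Gagliardo--Nirenberg--Sobolev inequality to pass from $\partial^{k-1}u,\ \partial^{k-1}(\partial u)\in L^2$ down to $u,\partial u\in L^p$ with $p=2n/(n-2k+2)>n$, and then invokes Morrey's inequality to get the sup bound. You instead work on the Fourier side, splitting $\|\hat u\|_{L^1}$ into low and high frequencies and applying Cauchy--Schwarz against the weights $|\xi|^{2(k-1)}$ and $|\xi|^{2k}$ respectively, with Plancherel converting the weighted $L^2$ norms back into $\|\partial^{k-1}u\|_{L^2}$ and $\|\partial^k u\|_{L^2}$. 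Your argument is more self-contained (it uses nothing beyond Plancherel and polar coordinates, whereas the paper imports two named Sobolev-type theorems) and it makes the role of each of the two terms on the left of \eqref{eq:53} completely transparent: the $\partial^k$ term handles $|\xi|>1$ and needs $k>n/2$, the $\partial^{k-1}$ term handles $|\xi|\le 1$ and needs $k<n/2+1$. The paper's route has the advantage of generalizing more readily to non-$L^2$ settings and to domains where Fourier analysis is unavailable, which is in the spirit of the rest of the article.

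Your observation that the hypothesis must really be $n/2<k<n/2+1$ is correct and is not an artifact of your method: the scaling test $u_R(x)=\phi(x/R)$ does falsify \eqref{eq:53} for $k>n/2+1$, and the paper's own proof silently uses the same restriction, since the iterated Gagliardo--Nirenberg--Sobolev step requires $n-2(k-1)>0$ for the exponent $p=2n/(n-2k+2)$ to be finite and positive. As you note, both invocations of the lemma in the paper, $(n,k)=(3,2)$ and $(n,k)=(7,4)$, satisfy the corrected hypothesis, so nothing downstream is affected.
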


\begin{proof}
  The estimate (\ref{eq:53}) will be a consequence of the following two
  classical estimates. The first one is the Gagliardo-Nirenberg-Sobolev
  inequality: assume that $1\leq p <n$, then exists a constant $C$, depending
  only on $p$ and $n$, such that
\begin{align}
  \label{eq:54}
  ||u||_{L^{q}(\mathbb{R}^n)}\leq C ||\partial u||_{L^{p}(\mathbb{R}^n)},
\end{align}
for all $u\in C^\infty_0(\mathbb{R}^n)$, where  
\begin{equation}
  \label{eq:55}
  q=\frac{pn}{n-p}.
\end{equation}

The second estimate is the Morrey's inequality: assume $n<p\leq \infty$, then
there exists a constant depending only on $p$ and $n$, such that
\begin{equation}
  \label{eq:58}
  \sup_{x\in \mathbb{R}^n} |u(x)|\leq C ||u||_{W^{1,p}(\mathbb{R}^n)}.
\end{equation}
See \cite{Evans98} for an elementary and clear presentation of these
inequalities and the functional spaces $L^{p}(\mathbb{R}^n)$,
$W^{1,p}(\mathbb{R}^n)$ involved in them.

We first observe that the estimate (\ref{eq:54}) can be iterated as follows: 
\begin{equation}
  \label{eq:68}
  ||u||_{L^{p_k}(\mathbb{R}^n)}\leq C ||\partial^k u||_{L^{p}(\mathbb{R}^n)},
\end{equation}
where $1 \leq k\leq n/p$, $1<p$ and $p_k$ is given by
\begin{equation}
  \label{eq:70}
  p_k=\frac{pn}{n-pk}.
\end{equation}
To prove (\ref{eq:68}) we use induction in $k$. For $k=1$ the inequality
(\ref{eq:68}) reduces to (\ref{eq:54}). Assume that (\ref{eq:68}) is valid for
$k$. If $\partial^{k+1}u \in L^{p}(\mathbb{R}^n)$, then by (\ref{eq:54}) we
obtain that $\partial^k u \in L^{q}(\mathbb{R}^n)$ with $q$ given by
\begin{equation}
  \label{eq:71}
  q=\frac{pn}{n-p}.
\end{equation}
By the inductive hypothesis we obtain that $u\in L^{q_k}(\mathbb{R}^n)$
with
\begin{equation}
  \label{eq:72}
  q_k=\frac{qn}{n-qk}.
\end{equation}
We substitute  (\ref{eq:71}) in (\pageref{eq:72}) to obtain
\begin{equation}
  \label{eq:73}
  q_k=\frac{pn}{n-(k+1)p}.
\end{equation}
And then the desired result is proved.

To prove (\ref{eq:53}), note that the left hand side of (\ref{eq:53}) implies
that $\partial^{k-1} w , \, \partial^{k-1} u \in L^2(\mathbb{R}^n)$ where
$w=\partial u$. Then, we apply the inequality (\ref{eq:68}) for both $w$ and
$u$, to obtain that $w,u\in L^p (\mathbb{R}^n)$, with $p$ given by
 \begin{equation}
   \label{eq:56}
   p=\frac{2n}{n-2k+2}.
 \end{equation}
By hypothesis $k>n/2$, then  we obtain that $p>n$. Hence, we have proved that
$u\in W^{1,p}(\mathbb{R}^n)$ with $p>n$. We use the Morrey inequality (\ref{eq:58}) and
the desired result follows.

\end{proof}


\end{document}